\newcommand{\dd}{\mathrm{d}}
\newcommand{\trans}{^\mathsf{T}}
\newcommand{\av}{\bm{a}}
\newcommand{\n}{\bm{l}_1}
\newcommand{\m}{\bm{r}_1}
\newcommand{\nper}{\bm{l}_2}
\newcommand{\mper}{\bm{r}_2}
\newcommand{\cv}{\bm{c}}
\newcommand{\ca}{\bm{c}^\ast}
\newcommand{\da}{\bm{d}^\ast}
\newcommand{\e}{\bm{e}}
\newcommand{\x}{\bm{x}}
\newcommand{\zero}{\bm{0}}
\newcommand{\y}{\bm{y}}
\newcommand{\nablatwo}{\nabla^2}
\newcommand{\normal}{\bm{\nu}}
\newcommand{\surface}{\mathscr{S}}
\newcommand{\nablas}{\nabla\!_\mathrm{s}}
\newcommand{\f}{\bm{f}}
\newcommand{\vv}{\bm{v}}
\newcommand{\F}{\mathbf{F}}
\newcommand{\I}{\mathbf{I}}
\newcommand{\real}{\mathbb{R}}
\newcommand{\euclid}{\mathscr{E}}
\newcommand{\transla}{\mathscr{V}}
\newcommand{\C}{\mathbf{C}}
\newcommand{\B}{\mathbf{B}}
\newcommand{\curl}{\operatorname{curl}}
\newcommand{\tr}{\operatorname{tr}}
\newcommand{\diam}{\operatorname{diam}}
\newcommand{\E}{\mathbf{E}}
\newcommand{\Ef}{\mathbf{E}_{\f}}
\newcommand{\Cf}{\C_{\f}}
\newcommand{\Cp}{\C_\phi}
\newcommand{\nay}{(\nabla\y)}
\newcommand{\naty}{(\nablatwo\y)}
\newcommand{\nan}{(\nabla\normal)}
\newcommand{\curvature}{(\nablas\normal)}
\newcommand{\A}{\mathbf{A}}
\newcommand{\slab}{\mathsf{S}}
\newcommand{\dir}{\bm{d}}
\newcommand{\framem}{(\m,\mper,\e_3)}
\newcommand{\framen}{(\n,\nper,\normal)}
\title{On the Kirchhoff-Love hypothesis\\  (revised and vindicated)}
\author{Olivier Ozenda and Epifanio G. Virga}
\institute{O. Ozenda \at Dipartimento di Matematica, Universit\`a di Pavia, Via Ferrata 5, 27100 Pavia, Italy \\
            \email{o.ozenda@unipv.it}           
           \and
           E.G. Virga \at Dipartimento di Matematica, Universit\`a di Pavia, Via Ferrata 5, 27100 Pavia, Italy\\
           \email{eg.virga@unipv.it}    
}
\date{\today}
\begin{document}
\maketitle
\begin{abstract}
The Kirchhoff-Love hypothesis expresses a kinematic constraint that is assumed to be valid for the deformations of a three-dimensional body when one of its dimensions is much smaller than the other two, as is the case for plates. This hypothesis has a long history checkered with the vicissitudes of life: even its attribution has been questioned, and recent rigorous dimension-reduction tools (based on $\Gamma$-convergence) have proven to be incompatible with it. We find that an appropriately revised version of the Kirchhoff-Love hypothesis is a valuable means to derive a two-dimensional variational model for elastic plates from a three-dimensional nonlinear free-energy functional. The bending energies thus obtained for a number of materials also show to contain measures of stretching of the plate's mid surface (alongside the expected measures of bending). The incompatibility with $\Gamma$-convergence also appears to be removed in the cases where contact with that method and ours can be made.
\end{abstract}
\section{Introduction}\label{sec:intro}
In a postmodern view, the theory of elasticity is a dead subject. By contrast, we rather hold that it is just deceptively simple: it makes one believe that everything is understood and  only routine computations need to be done, for which it suffices to devise the most appropriate algorithm (the distinguished job of computational mechanics). Nothing farthest from truth.

Perhaps a tangible sign of this is the revival of interest that new problems, mostly arising from soft matter physics, have engendered. Among these problems is that of  predicting the shape that a nematic polymeric network can take upon the action of external stimuli that affect its internal material organization. Nematic polymeric networks are elastomeric materials that bear elongated molecules attached  to the structural backbone, molecules capable of becoming ordered in orientation, as is typical of nematic liquid crystals. Light and heat can interfere with the orientational order of the nematic component, and this in turn can affect the polymeric backbone, inducing stresses that alter the shape of the body. Neat and rich reviews of the many interesting phenomena displayed by nematic elastomers can be found in \cite{warner:topographic} and \cite{white:programmable}, which we highly recommend reading.

Shape is the main object then; especially, when the body is a thin sheet, and so it is more prone to exhibit extraordinary changes of shape as a result of tenuous stimuli. Perhaps, the first stunning manifestation of the potential for applications hidden here was the swimmer ``that swims into the dark'' \cite{camacho-lopez:fast}, a sheet flapping on a fluid surface whenever reached by light. Other soft matter mechanics papers that draw on shape have a biological inspiration, such as \cite{gladman:biomimetic} and \cite{siefert:bio-inspired}. 

The statics of nematic elastomers in three space dimensions is described by an elastic free-energy density, commonly delivered by the ``trace formula'' of Warner and Terentjev's theory \cite{warner:liquid}. In essence, this theory extends the ideas underlying the isotropic Gaussian distribution of polymeric chains to chains made anisotropic by the mutual interactions of the nematogenic molecules appended to them. It is no surprise then if the free energy of nematic elastomers (of a purely entropic nature) turns out to be an anisotropic extension of the classical neo-Hookean formula of isotropic rubber elasticity. 

This formula is valid in the bulk, but we are interested in thin sheets. If a heuristic stretching energy is rather easy to obtain from the trace formula, and lately it has widely been used \cite{modes:disclination,plucinsky:programming,modes:gaussian,modes:negative,mostajeran:curvature,mostajeran:encoding,mostajeran:frame,kowalski:curvature,warner:nematic}, an attempt at deriving an appropriate bending energy is only very recent \cite{ozenda:blend}. A revised Kirchhoff-Love hypothesis, central to the theory of elastic plates and shells, proved  particularly instrumental to our derivation. Here, we wish to revisit this classical hypothesis in its natural environment, as it were, and show how its revision could possibly be used to derive reduced theories for nonlinear elastic plates in which stretching and bending energies are naturally combined together. 

Saying something really new on this subject is very hard. The early theories of plates derived from three-dimensional elasticity for linearly elastic materials were already proposed by Poisson~\cite{poisson:memoire} and Cauchy~\cite{cauchy:exercises}, building on formal expansions of strains and stresses in powers of the transverse coordinate $x_3$, ranging through the interval $[-h,h]$ representing the plate's thickness. Similar expansions, perhaps closer in spirit to the approach that we shall take, were considered by L\'evy~\cite{levy:memoire} in a paper forgotten by most, but revived in \cite{reissner:reflections}. Undoubtedly, the theory of plates was first established on firm grounds by Kirchhoff~\cite{kirchhoff:uber}, who later in his book \cite{kirchhoff:mechanik}, improving on Gehring's dissertation (published in Berlin in 1860), established the elastic energy of a plate as an expression that
\begin{quote}
	``consists of two parts: one a quadratic function of the quantities defining the extension of the middle-surface with a coefficient proportional to the thickness of the plate, and the other a quadratic function of the quantities defining the flexure of the middle-surface with a coefficient proportional to the cube of the thickness.'' \cite[p.\,27]{love:treatise}
\end{quote} 

Kirchhoff's plate theory is based on a number of hypotheses, which we found lucidly described in the account given in \cite{podio:exact}, based on a critical comparison of the statements appearing in revered books, such as \cite{timoshenko:theory} and \cite{novozhilov:foundations}. In essence, Kirchhoff's assumptions can be reduced to two kinematic statements \cite{podio:exact}:
\begin{enumerate} 
	\item Points of the plate lying initially on a normal to the middle plane of the plate remain on the normal to the middle surface of the plate after bending.
\item The distance of every point of the plate from the middle surface remains unchanged by the deformation.
\end{enumerate}
For mysterious reasons, these statements have become to be known as the \emph{Kirchhoff-Love hypothesis}; we stick to this not fully justified tradition.

This hypothesis has been variously criticized in the literature, mainly for the inconsistencies that it may cause with the distribution of stresses that (depending on the specific constitutive law) should sustain the features assumed for the deformation. Podio-Guidugli~\cite{podio:exact,podio:constraint} overcame this criticism by taking the view that the Kirchhoff-Love hypothesis is a constraint ob the admissible deformations, which is sustained by an appropriate reactive stress, specified in a class admitted by symmetry and determined from the equilibrium equations of three-dimensional elasticity.

More recently, the rigorous analytical tool of $\Gamma$-convergence has been employed to derive plate (and shell) theories from three-dimensional elasticity. Admittedly, the problem with this method is that presently it only affords to derive single powers of the energy expansion in the thickness $2h$. Said differently, we can obtain from a three-dimensional constitutive model either a ``membrane-dominated model'' or a ``flexural-dominated model'' (in the words of \cite{CiaMad-2018}), meaning that we can isolate two-dimensional energies either linear or cubic in $h$, respectively. For example, models in the former category have been derived in \cite{bourquin:gamma} for linear plates and in \cite{ledret:nonlinear_membrane} for nonlinear ones (as well as in \cite{ledret:membrane_shell} for nonlinear shells). Models in the latter category have been derived in \cite{friesecke:theorem,friesecke:hierarchy} for nonlinear plates (as well as in \cite{friesecke:derivation_shell} for nonlinear shells). These higher-order $\Gamma$-limits, however, need to be evaluated on the class of deformations that minimize the lower order. In other words, we may only recover the $h$-cubic bending energy on the minimizers of the $h$-linear stretching energy.\footnote{More specifically, in \cite{friesecke:theorem} the bending energy is obtained by computing the $\Gamma$-limit of the energy of nonlinear elasticity over the isometric embeddings of the plate's mid plane.} Sadly, the notion of  $\Gamma$-limit has not yet evolved into that of  $\Gamma$-expansion, and thus it does not yet serve the purpose of deriving blended stretching and bending energies, free to conspire together in a thin sheet of a (possible activable) elastic material, which is our objective here.

Limited as the derivation of the bending energy via $\Gamma$-convergence may be, it raised further critiques against the Kirchhoff-Love hypothesis. It was proven in \cite{friesecke:theorem} that the rigorous bending energy (in the flexural-dominated model) is incompatible with the deformation field assumed by the Kirchhoff-Love hypothesis. Here, we try and remedy this shortcoming by revising (and salvaging) the classical hypothesis. In particular, we shall see that the incompatibility pointed out in \cite{friesecke:theorem} is resolved by our revised hypothesis.

The paper is organized as follows. In Sect.~\ref{sec-mod}, we recall the basic kinematics of plates and present our revision of the classical Kirchhoff-Love hypothesis. In Sect.~\ref{sec:connectors}, we introduce a description for the deformation of a smooth surface that relies on a notion of \emph{Cartesian connectors}, which avoid the use of coordinates and Christoffel symbols. Sections \ref{sec-gent} and \ref{sec-compres} are devoted to the dimension-reduction afforded by our revised kinematic hypothesis in two distinct constitutive classes, one for incompressible plates ad the other for compressible ones. We consider a number of special nonlinear elastic models for the application of our method; for all we derive stretching and bending energies. The features that these results have in common is the presence in the bending energy of stretching measures of the mid surface of the plate  (alongside the expected measures of bending). In Sect.~\ref{sec:conclusion}, we summarize our conclusions and comment on some possible avenues along which this work could be extended. This paper is closed by an Appendix, where we give explicit formulae for the mean and Gaussian curvatures of the deformed mid surface in terms of the mapping that describes it.

\section{Kinematics of plates}
\label{sec-mod}
Here we wish to describe the deformation of an elastic plate  with a uniform width
(and a planar reference configuration). The deformation  will be split into two components,
a planar one, which maps the reference mid plane surface  onto a mid deformed surface, and an axial one, which maps 
vectors normal  to the mid reference surface on  vectors normal to the deformed mid surface.
The classical \emph{Kirchhoff-Love hypothesis} consists in assuming that the second mapping is an isometry (see, for example, \cite[p.\,551]{villaggio:mathematical} and \cite[p.\,156]{ciarlet:introduction}).
Because of this isometry, assuming regularity for the planar mapping is enough to ensure an admissible deformation
for sufficiently thin plates. In the following, the latter assumption will be made precise and the isometric constraint along normals will be relaxed. In this framework, an approximate right Cauchy-Green 
tensor will be constructed and its invariants computed.

\subsection{Kinematic preliminaries}
Let $S$ be a bounded, two-dimensional flat domain immersed in three-dimensional Euclidean space $\euclid$ and  $h>0$ a real constant. We call
$\y :S\rightarrow\euclid$ an injective $\mathcal{C}^3$-immersion of $S$ and we denote its image as $\surface=\y(S)$. Pursuing our aim of 
extending the Kirchhoff-Love hypothesis, we
interpret the closed set $\slab:=\overline{S}\times[-h,h]\subset\euclid$ as the   reference configuration of an elastic plate
whose mid surface is $\overline{S}$. As we focus on the case $2h\ll\diam S$, 
we set $\diam S=1$, for simplicity, meaning that we shall rescale all lengths to $\diam S$.

We define the mapping $\f :\slab\rightarrow\euclid$ as
\begin{equation}
\label{eq:f_definition}
\f(\x,x_3)=\y(\x)+\phi(\x,x_3)\normal(\x),
\end{equation}
where $\normal$ is the unit normal vector
to $\surface$  and $\phi :\slab\rightarrow\real$ is a $\mathcal C^2$-function which 
describes how normals to $S$ deform into normals to  $\surface$. It follows from \eqref{eq:f_definition} that the deformation gradient reads as
\begin{equation}\label{eq:deformation_gradient}
\F(\x,x_3):=\mathrm{D}\f=\nabla\y+\phi\nabla\normal+\phi'\normal\otimes\e_3+\normal\otimes\nabla\phi,
\end{equation}
where $\nabla$ denotes the gradient in $\x$, a prime $'$ denotes differentiation with respect to $x_3$, and $\e_3$ is the unit normal to $S$.\footnote{Here we identify the set $S$ with its trivial embedding in $\euclid$.}
Furthermore, $\phi$ is assumed to obey
\begin{equation}
\label{eq-orient}
   \left\{\begin{array}{l}
          \phi(\x,0)=0,\\
          \phi'(\x,0)>0,
   \end{array}\right.
   \qquad\forall\, \x\in \overline{S},
\end{equation}
which is justified by the requirement that $\f$ be orientation-preserving at least for $x_3=0$, as there, by \eqref{eq:deformation_gradient},
\begin{equation}
\label{eq:det_F}
\det\F(\x,0)=\phi'(\x,0)\det\nay>0.
\end{equation}

The classical Kirchhoff-Love hypothesis just requires that $\phi\equiv x_3$, and so it trivially complies with \eqref{eq-orient}. In our approach, $\phi$ will rather remain free and either determined to enforce the constraint of bulk incompressibility or used to minimize the elastic energy stored across the thickness of the deformed plate. We see now how \eqref{eq-orient} can ensure that
$\f$ is an orientation-preserving $\mathcal C^2$-diffeomorphism onto its image, for  appropriately small values of $h$. 
\begin{proposition}
\label{prop-prelim}
There is $h>0$ such that the mapping $\f$ defined in \eqref{eq:f_definition} is a $\mathcal C^2$-diffeomorphism
from $\slab$ onto $\f(\slab)$ and 
 $\det\F(\x,x_3)>0\ \forall\, (\x,x_3)\in\slab$.
\end{proposition}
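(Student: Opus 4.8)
The plan is to verify the three properties whose conjunction means that $\f$ is a $\mathcal{C}^2$-diffeomorphism onto its image: first, that the Jacobian determinant $\det\F$ stays strictly positive on all of $\slab$ once $h$ is small enough; second, that $\f$ is then a \emph{local} $\mathcal{C}^2$-diffeomorphism at every point; and third, that $\f$ is \emph{globally} injective on $\slab$ for $h$ small. The first two properties follow from continuity and the inverse function theorem; the genuine difficulty, and the step I expect to be the main obstacle, is passing from local to global injectivity.

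First I would note that, since $\y$ is a $\mathcal{C}^3$-immersion, the unit normal $\normal$ is a $\mathcal{C}^2$ field, so $\f$ is $\mathcal{C}^2$ and $\det\F$ is continuous on $\slab$; extending $\y$ and $\phi$ to maps of the same regularity on an open set containing $\overline{S}$ lets me regard $\det\F$ as continuous on a full neighborhood (which is convenient for applying the inverse function theorem at boundary points). By \eqref{eq:det_F} the function $\x\mapsto\det\F(\x,0)=\phi'(\x,0)\det\nay$ is continuous and strictly positive on the compact set $\overline{S}$, hence bounded below by some $m>0$. Uniform continuity of $(\x,x_3)\mapsto\det\F(\x,x_3)$ on a compact slab $\overline{S}\times[-h_0,h_0]$ then yields $\delta\in(0,h_0]$ with $\det\F>m/2>0$ whenever $|x_3|\le\delta$, so fixing any $h\le\delta$ secures the positivity claim. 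With $\F=\mathrm{D}\f$ invertible everywhere, the inverse function theorem makes $\f$ a local $\mathcal{C}^2$-diffeomorphism at each point.

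The heart of the matter is global injectivity, which I would establish by contradiction and compactness. If no admissible $h$ existed, there would be a sequence $h_n\to0$ together with pairs $(\x_1^n,s_1^n)\neq(\x_2^n,s_2^n)$ in $\overline{S}\times[-h_n,h_n]$ satisfying $\f(\x_1^n,s_1^n)=\f(\x_2^n,s_2^n)$. As $\overline{S}$ is compact and $|s_i^n|\le h_n\to0$, a subsequence gives $\x_i^n\to\x_i^\ast$ and $s_i^n\to0$. Letting $n\to\infty$ in the collision identity and using $\phi(\x,0)=0$ yields $\y(\x_1^\ast)=\y(\x_2^\ast)$, so the injectivity of $\y$ forces $\x_1^\ast=\x_2^\ast=:\x^\ast$. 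But $\det\F(\x^\ast,0)>0$, so the inverse function theorem makes $\f$ injective on some neighborhood $U$ of $(\x^\ast,0)$; for $n$ large both points lie in $U$, contradicting $(\x_1^n,s_1^n)\neq(\x_2^n,s_2^n)$. This rules out arbitrarily small colliding pairs and delivers injectivity for all sufficiently small $h$.

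As a constructive check one may instead factor $\f=T\circ\Psi$ with $\Psi(\x,x_3):=(\x,\phi(\x,x_3))$ and $T(\x,t):=\y(\x)+t\normal(\x)$: shrinking $h$ so that $\phi'>0$ throughout $\slab$ makes $x_3\mapsto\phi(\x,x_3)$ strictly increasing and hence $\Psi$ injective, while $\phi(\x,0)=0$ gives $|\phi|\le Ch$, so that $\Psi(\slab)$ lies in a thin tube about $\overline{S}\times\{0\}$ on which $T$ is injective by the tubular-neighborhood theorem for the compact embedded surface $\y(\overline{S})$. Finally, being a continuous bijection from the compact set $\slab$ onto $\f(\slab)$, the map $\f$ is a homeomorphism; together with the $\mathcal{C}^2$ local inverses furnished by the inverse function theorem (applied to the $\mathcal{C}^2$ extension of $\f$), this shows $\f^{-1}$ is $\mathcal{C}^2$, so $\f$ is the claimed $\mathcal{C}^2$-diffeomorphism with $\det\F>0$ throughout $\slab$.
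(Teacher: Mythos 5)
Your proof is correct and follows the same overall strategy as the paper's, which adapts Theorem 4.1-1 of Ciarlet's \emph{Introduction to differential geometry with applications to elasticity}: positivity of $\det\F$ near $x_3=0$ by continuity and compactness, then the inverse function theorem for local invertibility, then a passage to global injectivity. The difference is one of completeness rather than of method. The paper compresses the global step into the assertion that the locally defined half-thickness $\eta(\x)$ attains a strictly positive minimum over $\overline{S}$, ``otherwise $\y$ would fail to be an injective immersion''; as written, even $h<\min\eta$ only guarantees injectivity of $\f$ on each product neighborhood $U(\x)\times[-\eta(\x),\eta(\x)]$, not on all of $\slab$, since two points lying in different neighborhoods could a priori still collide. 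Your sequential compactness argument --- extracting colliding pairs for $h_n\to0$, sending $x_3\to0$, invoking the injectivity of $\y$ to force the limit points to coincide, and then contradicting local injectivity at $(\x^\ast,0)$ --- is precisely the argument that makes this step rigorous (and is in fact the one carried out in Ciarlet's cited theorem). Your alternative factorization $\f=T\circ\Psi$ with the tubular-neighborhood theorem is a clean second route to the same conclusion, and your closing observation that a continuous bijection from the compact set $\slab$ is a homeomorphism, with $\mathcal{C}^2$ regularity of the inverse supplied locally by the inverse function theorem, properly finishes the diffeomorphism claim. In short: same approach, but you have filled the one genuine gap in the paper's write-up.
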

\begin{proof}
In the following two steps, we adapt  Theorem 4.1-1 of \cite[p.\,157]{ciarlet:introduction} to our setting.
\begin{enumerate}
\item By the continuity of $\F$, inequality \eqref{eq:det_F} implies that 
\begin{equation*}
\exists\, h_1>0\ \text{such that}\ \det\F(\x,x_3)>0\ \forall\, (\x,x_3)\in \overline{S}\times[-h_1,h_1].
\end{equation*}
\item
As $\phi'(\x,0)>0\ \forall\, \x\in \overline{S}$, the implicit function theorem can be applied on the 
closure  $\overline{S}\times[-h_1,h_1]$. Then, there are $\eta(\x)\in(0,h_1)$ and a neighborhood $U(\x)$ of $\x$ in $\overline{S}$ such that 
$\f$ is a $\mathcal C^2$-diffeomorphism from $U(\x)\times[-\eta(\x),\eta(\x)]$ onto $\f(U(\x)\times[-\eta(\x),\eta(\x)])$. Since
$\overline{S}$ is compact, $\eta$ attains it minimum in $\overline{S}$. Moreover, the minimum of $\eta$ over $\overline{S}$ must be strictly positive, otherwise  $\y$ would fail to be an injective immersion. 
Thus, $h$ can be chosen so that 
\begin{equation}
\label{eq:h_definition}
0<h<\min_{\x\in\overline{S}}(\eta(\x)),
\end{equation}
which is where  we shall hereafter take it to be.
\end{enumerate}
\qed
\end{proof}
In Sects. \ref{sec-gent} and \ref{sec-compres}, we shall use a polynomial approximation for $\phi$ in computing the invariants of the right Cauchy-Green tensor $\Cf$ associated with $\f$. Now, we justify this approximation and lay down a number of preliminary formulae for the invariants of $\Cf$.

\subsection{Invariants of $\Cf$}
\label{sub-approx}
We learned in Prop.~\ref{prop-prelim} how to choose $h>0$ sufficiently small so that the mapping $\f$ is a  $\mathcal C^2$-diffeomorphism. Hypothesis \eqref{eq-orient} also implies that  $\nabla\phi(\x,0)=0\ \forall\, \x\in\overline{S}$; thus, it is also possible
to choose $h$ so small that 
\begin{equation}
\label{eq:inequality}
|\nabla\phi(\x,x_3)|\ll\phi'(\x,x_3)\quad\forall\, (\x,x_3)\in\slab.
\end{equation}
This inequality will be assumed to be valid in the following, and $h$ will be taken to comply with both \eqref{eq:h_definition} and \eqref{eq:inequality}.  Within the approximation stated in \eqref{eq:inequality}, $\F$ in \eqref{eq:deformation_gradient} will be written as 
\begin{equation}
\label{eq:deformation_gradient_rewritten}
\F=\nabla\y+\phi\nabla\normal+\phi'\normal\otimes\e_3.
\end{equation}
\begin{definition}
\label{def-cauchy}
The corresponding \emph{right} Cauchy-Green tensor $\Cf$ associated with $\f$ is given by
\begin{subequations}\label{eq:definition_1}
\begin{equation}\label{eq:C_f_definition}
 \Cf:=\F\trans \F=\C_\phi+\phi'^2\e_3\otimes\e_3,
\end{equation}
where
\begin{equation}
\label{eq:C_phi_definition}
 \C_\phi:=\C+\phi\C_1+\phi^2\C_2
\end{equation}
and
\begin{equation}
\label{eq:C_C1_C2}
\C:=\nay\trans\nay,\quad\C_1:=\nay\trans\nan+\nan\trans\nay,\quad\C_2:=\nan\trans\nan.
\end{equation}
\end{subequations}
\end{definition}
Here $\C$ is the \emph{right} Cauchy-Green tensor associated with the deformation $\y$, while
\begin{equation}
\label{eq:B_definition}
\B:=\nay\nay\trans
\end{equation}
is the \emph{left} Cauchy-Green tensor associated with the same deformation.
The reader should heed that all tensors $\C$, $\C_1$, and $\C_2$ act on the two-dimensional space $\transla_3:=\{\vv\in\transla:\vv\cdot\e_3=0\}$, where $\transla$ is the translation space associated with three-dimensional Euclidean space $\euclid$. $\B(\x)$, however, at the place $\y(\x)\in\surface$,  acts on the two-dimensional space $\transla_{\normal}:=\{\vv\in\transla:\vv\cdot\normal=0\}$.
\begin{remark}
\label{rem-I1}
The curvature tensor $\nablas\normal$ of $\surface$, where $\nablas$ denotes the surface gradient on $\surface$, is a symmetric tensor on $\transla_{\normal}$ (see, for example, \cite{gurtin:continuum}). We easily see that both tensors $\C_1$ and $\C_2$ can be expressed in terms of $\nablas\normal$. As $\nablas\normal=\nan\nay^{-1}$, it readily follows from \eqref{eq:C_C1_C2} and the symmetry of $\nablas\normal$ that 
\begin{equation}\label{eq:C1_C2_curvature}
\C_1=2\nay\trans\curvature\nay,
\qquad
\C_2=\nay\trans\curvature^2\nay.
\end{equation}
\end{remark}

We now compute the principal invariants of $\Cf$.
\begin{proposition}\label{prop:I_1}
The first invariant, $I_1=\tr\Cf$, can be given the form
\begin{equation}\label{eq:I_1}
I_1=(1-\phi^2K)\tr\C+2\phi(1+\phi H)\tr(\B\nablas\normal)+\phi'^2,
\end{equation}
where
\begin{equation}
\label{eq:curvatures}
H:=\frac 12\tr\curvature\quad\text{and}\quad K:=\det\curvature
\end{equation}
are the \emph{mean} and  \emph{Gaussian} curvatures of $\surface$, respectively.
\end{proposition}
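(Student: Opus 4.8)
The plan is to reduce everything to the trace of the three summands making up $\Cf$ and then rewrite those traces in terms of the curvature tensor $\curvature$ by means of Remark~\ref{rem-I1}. Since $\Cf=\Cp+\phi'^2\e_3\otimes\e_3$ with $\Cp=\C+\phi\C_1+\phi^2\C_2$, and since $\tr(\e_3\otimes\e_3)=1$, linearity of the trace gives at once
\[
I_1=\tr\C+\phi\,\tr\C_1+\phi^2\,\tr\C_2+\phi'^2.
\]
Everything then hinges on computing $\tr\C_1$ and $\tr\C_2$.

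For the linear term, I would insert $\C_1=2\nay\trans\curvature\nay$ from \eqref{eq:C1_C2_curvature} and cyclically permute under the trace, $\tr(\nay\trans\curvature\nay)=\tr(\curvature\,\nay\nay\trans)=\tr(\B\curvature)$, recognizing $\B=\nay\nay\trans$. This yields $\tr\C_1=2\tr(\B\curvature)$, which will account for the $2\phi\,\tr(\B\nablas\normal)$ contribution. The same manipulation applied to $\C_2=\nay\trans\curvature^2\nay$ gives $\tr\C_2=\tr(\B\curvature^2)$.

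The crux is to simplify $\tr(\B\curvature^2)$. Here I would use that $\curvature$ acts on the two-dimensional space $\transla_{\normal}$, so the Cayley--Hamilton theorem reads $\curvature^2=(\tr\curvature)\curvature-(\det\curvature)\I_\normal=2H\curvature-K\I_\normal$, where $\I_\normal$ denotes the identity on $\transla_{\normal}$ and the definitions \eqref{eq:curvatures} have been used. Substituting, and using that $\B$ leaves $\transla_{\normal}$ invariant so that $\B\I_\normal=\B$, gives $\tr\C_2=2H\tr(\B\curvature)-K\tr\B$. Finally, $\tr\B=\tr(\nay\nay\trans)=\tr(\nay\trans\nay)=\tr\C$, again by cyclicity, converts the last term into $-K\tr\C$. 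Collecting the three contributions,
\[
I_1=\tr\C+2\phi\,\tr(\B\curvature)+\phi^2\bigl(2H\tr(\B\curvature)-K\tr\C\bigr)+\phi'^2,
\]
and regrouping by $\tr\C$ and $\tr(\B\curvature)$ yields \eqref{eq:I_1}.

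The only point requiring care---and the main obstacle---is bookkeeping the domains of the various operators: $\nay$ maps $\transla_3$ onto $\transla_{\normal}$, while both $\curvature$ and $\B$ live on $\transla_{\normal}$. Each cyclic permutation of the trace must therefore be read as an identity $\tr(AB)=\tr(BA)$ for maps passing between $\transla_3$ and $\transla_{\normal}$, and the Cayley--Hamilton identity must be invoked on $\transla_{\normal}$, where $\curvature$ is genuinely a two-dimensional operator, rather than on the ambient translation space $\transla$.
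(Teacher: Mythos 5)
Your proposal is correct and follows essentially the same route as the paper: expand $I_1=\tr\C+2\phi\tr(\B\curvature)+\phi^2\tr(\B\curvature^2)+\phi'^2$ and then reduce the quadratic term via the identity $\tr(\B\curvature^2)=2H\tr(\B\nablas\normal)-K\tr\C$. The only difference is that the paper proves this identity by an explicit component computation in the eigenframe of $\curvature$, whereas you derive it from the Cayley--Hamilton theorem for $\curvature$ acting on $\transla_{\normal}$ (together with $\tr\B=\tr\C$) --- the same device the paper itself employs in the proofs of Propositions~\ref{prop-I3} and~\ref{prop:I_2} --- so your argument is, if anything, slightly tidier.
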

\begin{proof}
	We see   from equations \eqref{eq:definition_1} and Remark~\ref{rem-I1} that
\begin{equation}\label{eq:pre_I_1}
I_1=\tr\C+2\phi\tr(\B\curvature)+\phi^2\tr(\B\curvature^2)+\phi'^2.
\end{equation}
The desired conclusion follows from the identity,
\begin{equation}
\label{eq:identity}
\tr(\B\curvature^2)=2H\tr(\B\nablas\normal)-K\tr\C,
\end{equation}
which we now proceed to prove. First, we represent locally the curvature tensor $\nablas\normal$ of $\surface$ as
\begin{equation}
\label{eq:curvature_representation}
\nablas\normal=\kappa_1\bm{n}_1\otimes\bm{n}_1+\kappa_2\bm{n}_2\otimes\bm{n}_2,
\end{equation}
where $\kappa_1$, $\kappa_2$ are the principal curvatures of $\surface$ and $\bm{n}_1$, $\bm{n}_2$, orthogonal unit vectors of $\transla_{\normal}$, are the corresponding principal directions of curvature, so that,  at each place on $\surface$, $\nablas\normal$ is a symmetric tensor acting on  $\transla_{\normal}$. Since $\B$ is also a symmetric tensor acting on $\transla_{\normal}$, it can be represented in the frame $(\bm{n}_1,\bm{n}_2)$ as
\begin{equation}\label{eq:B_representation}
\B=B_{11}\bm{n}_1\otimes\bm{n}_1+B_{22}\bm{n}_2\otimes\bm{n}_2+B_{12}(\bm{n}_1\otimes\bm{n}_2+\bm{n}_2\otimes\bm{n}_1).
\end{equation} 
Then,
\begin{equation}
\label{eq:intermediate_step}
\tr(\B\curvature^2)-2H\tr(\B\nablas\normal)=B_{11}\kappa_1^2+B_{22}\kappa_2^2-(\kappa_1+\kappa_2)(B_{11}\kappa_1+B_{22}\kappa_2)=-K\tr\B,
\end{equation}	
which is precisely \eqref{eq:identity}, as by \eqref{eq:B_definition} $\tr\B=\tr\C$. Use of \eqref{eq:identity} in \eqref{eq:pre_I_1} finally proves \eqref{eq:I_1}. 
	\qed
\end{proof}		
Equation \eqref{eq:I_1} shows that 
$I_1$ involves at most  quadratic terms in $\phi$ and $\phi'$.  As a consequence of the orthogonal decomposition in \eqref{eq:C_f_definition}, the second and third invariants of $\Cf$, $I_2$ and $I_3$,  will also involve higher powers of $\phi$, but not of $\phi'$. To justify the power expansion of the stored elastic energy considered in the following, we need only retain in $I_2$ and $I_3$ the terms at most quadratic in $\phi$; all higher powers of $\phi$ will be neglected. 
\begin{proposition}
\label{prop-I3}
The third invariant, $I_3=\det\Cf$, is expressed by
\begin{equation}
\label{eq:I_3}
I_3=\phi'^2\det\C\left(1+4H\phi+(4H^2+2K)\phi^2\right)+O\left(\phi^3\right).
\end{equation}
\end{proposition}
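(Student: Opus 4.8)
The plan is to exploit the orthogonal decomposition already recorded in \eqref{eq:C_f_definition}. Since $\Cf=\Cp+\phi'^2\e_3\otimes\e_3$, with $\Cp$ mapping $\transla_3$ into itself and the rank-one term acting along $\e_3$, the matrix of $\Cf$ is block diagonal in any frame adapted to the splitting $\transla_3\oplus\real\e_3$, whence
\begin{equation*}
I_3=\det\Cf=\phi'^2\,\det\Cp,
\end{equation*}
the determinant on the right being the $2\times2$ determinant of $\Cp$ as an endomorphism of $\transla_3$. Everything then reduces to expanding $\det\Cp$ through second order in $\phi$.

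For the second step I would factor out $\C$. Because $\C=\nay\trans\nay$ is invertible (by Prop.~\ref{prop-prelim}, $\det\nay\neq0$), I write $\Cp=\C(\I+\mathbf{N})$ with $\mathbf{N}:=\phi\,\C^{-1}\C_1+\phi^2\,\C^{-1}\C_2$, so that $\det\Cp=\det\C\,\det(\I+\mathbf{N})$. On the two-dimensional space $\transla_3$ the exact identity $\det(\I+\mathbf{N})=1+\tr\mathbf{N}+\det\mathbf{N}$ holds; since $\det\mathbf{N}=\phi^2\det(\C^{-1}\C_1)+O(\phi^3)$ and $\tr\mathbf{N}=\phi\,\tr(\C^{-1}\C_1)+\phi^2\,\tr(\C^{-1}\C_2)$, this yields
\begin{equation*}
\det\Cp=\det\C\Big(1+\phi\,\tr(\C^{-1}\C_1)+\phi^2\big[\tr(\C^{-1}\C_2)+\det(\C^{-1}\C_1)\big]\Big)+O(\phi^3).
\end{equation*}

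The third step is to evaluate these three scalars by means of Remark~\ref{rem-I1}. Using $\C^{-1}=\nay^{-1}\nay^{-\mathsf{T}}$ together with \eqref{eq:C1_C2_curvature}, I find $\C^{-1}\C_1=2\,\nay^{-1}\curvature\nay$ and $\C^{-1}\C_2=\nay^{-1}\curvature^2\nay$, which are conjugate (through $\nay:\transla_3\to\transla_{\normal}$) to $2\curvature$ and $\curvature^2$, respectively. As trace and determinant are invariant under conjugation, it follows that $\tr(\C^{-1}\C_1)=2\tr\curvature=4H$, that $\det(\C^{-1}\C_1)=4\det\curvature=4K$, and that $\tr(\C^{-1}\C_2)=\tr\curvature^2=\kappa_1^2+\kappa_2^2$. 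Finally, inserting $\kappa_1^2+\kappa_2^2=(\kappa_1+\kappa_2)^2-2\kappa_1\kappa_2=4H^2-2K$ collapses the $\phi^2$ coefficient to $(4H^2-2K)+4K=4H^2+2K$, and multiplying through by $\phi'^2$ produces \eqref{eq:I_3}.

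The computation past the first step is essentially bookkeeping, so the one genuine point to watch is the block-diagonal factorization itself: one must verify that the axial term $\phi'^2\e_3\otimes\e_3$ is truly orthogonal to the action of $\Cp$ (that $\Cp$ preserves $\transla_3$ and annihilates $\e_3$), since this is exactly what licenses the splitting of the determinant and the clean emergence of $\phi'^2$ as an overall prefactor. After that, the only care needed is to retain contributions through order $\phi^2$, discarding the $\phi^3$ and $\phi^4$ terms hidden in $\det\mathbf{N}$, which are harmlessly absorbed into the $O(\phi^3)$ remainder once the bounded factor $\phi'^2$ is reinstated.
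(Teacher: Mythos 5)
Your proof is correct and follows essentially the same route as the paper's: both reduce to $I_3=\phi'^2\det\C_\phi$ via the orthogonal splitting, expand $\det\C_\phi$ to second order in $\phi$ on the two-dimensional space $\transla_3$, and evaluate the resulting traces and determinants through the conjugation to $\nablas\normal$ together with $\tr\curvature^2=4H^2-2K$. The only (cosmetic) difference is that you organize the two-dimensional expansion via $\det(\mathbf{I}+\mathbf{N})=1+\tr\mathbf{N}+\det\mathbf{N}$ after factoring out $\C$, whereas the paper obtains the same expansion directly from the cross-product identities \eqref{eq:identities}; the terms $\phi^2\det\C\,\det(\C^{-1}\C_1)$ and $\phi^2\det\C_1$ coincide, so the two computations agree line by line.
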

\begin{proof}
First, we note  that, by \eqref{eq:C_f_definition},
\begin{equation}\label{eq:pre_I_3}
I_3=\phi'^2\det\Cp.
\end{equation}
Then we consider two elementary identities valid for any second-order tensor $\A$ on $\transla_3$: for any othonormal basis $(\e_1,\e_2)$ of $\transla_3$,
\begin{equation}\label{eq:identities}
\det\A=\A\e_1\times\A\e_2\cdot\e_3\quad\text{and}\quad\tr\A=(\A\e_1\times\e_2+\e_1\times\A\e_2)\cdot\e_3.
\end{equation}
Making use of these identities, we readily obtain from \eqref{eq:C_phi_definition} that
\begin{equation}\label{eq:det_C_phi}
\det\C_\phi=\det\C\left(1+\phi\tr\left(\C^{-1}\C_1\right)+\phi^2\tr\left(\C^{-1}\C_2\right)\right)
             +\phi^2\det\C_1 +O\left(\phi^3\right)
\end{equation}
Basic properties of trace and determinant ensure that
\begin{subequations}\label{eq:basis_properties}
\begin{eqnarray}
\tr\left(\C^{-1}\C_1\right)&=&2\tr\curvature=4H,
\\
\tr\left(\C^{-1}\C_2\right)&=&\tr\curvature^2,
\\
\det\C_1&=&4K\det\B=4K\det\C.
\end{eqnarray}
\end{subequations}
Moreover, by the Cayley-Hamilton theorem,
\begin{equation*}
\tr\curvature^2=\tr^2\curvature-2\det\curvature=4H^2-2K,
\end{equation*}
which together with \eqref{eq:pre_I_3}, \eqref{eq:det_C_phi}, and \eqref{eq:basis_properties} lead us to \eqref{eq:I_3}.
\qed
\end{proof}
\begin{proposition}\label{prop:I_2}
The second invariant $I_2$ of $\Cf$ is given by
\begin{eqnarray}\label{eq:I_2}
I_2:&=&\frac12\left(\tr^2\Cf-\tr\Cf^2\right)\nonumber
\\
&=&\det\C\left(1+4H\phi+(4H^2+2K)\phi^2\right)\nonumber
 \\
   &+&\phi'^2\left((1-\phi^2K)\tr\C+2\phi(1+\phi H)\tr(\B\nablas\normal)\right)
   +O\left(\phi^3\right).
\end{eqnarray}
\end{proposition}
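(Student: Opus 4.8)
The plan is to exploit the block structure of $\Cf$ encoded in the orthogonal decomposition \eqref{eq:C_f_definition}, rather than to expand $\frac12(\tr^2\Cf-\tr\Cf^2)$ by brute force. Since $\Cp$ acts on the two-dimensional space $\transla_3$ while the rank-one term $\phi'^2\e_3\otimes\e_3$ acts on the orthogonal complement spanned by $\e_3$, the tensor $\Cf$ is block diagonal; its spectrum is therefore the union of the two eigenvalues $\mu_1,\mu_2$ of $\Cp$ with the single eigenvalue $\phi'^2$. This one observation carries the whole argument.

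From the spectrum I would read off the second invariant as
\begin{equation*}
I_2=\mu_1\mu_2+\phi'^2(\mu_1+\mu_2)=\det\Cp+\phi'^2\tr\Cp.
\end{equation*}
The same identity also follows directly from the definition: because $\Cf^2=\Cp^2+\phi'^4\e_3\otimes\e_3$, one has $\tr\Cf=\tr\Cp+\phi'^2$ and $\tr\Cf^2=\tr\Cp^2+\phi'^4$, and substituting into $\frac12(\tr^2\Cf-\tr\Cf^2)$ cancels the quartic terms in $\phi'$ and leaves $\det\Cp+\phi'^2\tr\Cp$, using $\det\Cp=\frac12(\tr^2\Cp-\tr\Cp^2)$ for the planar tensor $\Cp$.

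It then remains only to insert two quantities already in hand. Proposition~\ref{prop:I_1} gives the trace at once, since $\tr\Cf=\tr\Cp+\phi'^2$ forces $\tr\Cp=(1-\phi^2K)\tr\C+2\phi(1+\phi H)\tr(\B\nablas\normal)$. The determinant is supplied by the proof of Proposition~\ref{prop-I3}: by \eqref{eq:pre_I_3} we have $I_3=\phi'^2\det\Cp$, so \eqref{eq:I_3} yields $\det\Cp=\det\C(1+4H\phi+(4H^2+2K)\phi^2)+O(\phi^3)$. Adding $\det\Cp$ to $\phi'^2\tr\Cp$ then reproduces \eqref{eq:I_2} term by term.

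I do not anticipate a genuine obstacle: once the block structure is recognized, both ingredients have been computed in the two preceding propositions and no fresh calculation is required. The only point deserving care is the bookkeeping of the neglected powers of $\phi$. The remainder enters solely through $\det\Cp$ and is independent of $\phi'$, whereas $\tr\Cp$ is exact through order $\phi^2$; this is precisely why the $O(\phi^3)$ term in \eqref{eq:I_2} stands alone, unmultiplied by $\phi'^2$, inherited unchanged from Proposition~\ref{prop-I3}.
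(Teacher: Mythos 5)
Your proposal is correct and follows essentially the same route as the paper: both reduce $I_2$ to the identity $I_2=\det\Cp+\phi'^2\tr\Cp$ via the orthogonal decomposition \eqref{eq:C_f_definition} (the paper invokes the Cayley--Hamilton theorem for $\Cp$ where you use the equivalent two-dimensional identity $\det\Cp=\frac12(\tr^2\Cp-\tr\Cp^2)$), and then substitute $\tr\Cp=I_1-\phi'^2$ from Proposition~\ref{prop:I_1} and $\det\Cp$ from \eqref{eq:det_C_phi}--\eqref{eq:basis_properties}. Your closing remark on why the $O(\phi^3)$ remainder is not multiplied by $\phi'^2$ is a correct and welcome piece of bookkeeping.
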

\begin{proof}
	It follows from \eqref{eq:C_f_definition} that 
	\begin{equation}\label{eq:I_2_C_phi}
	I_2=\frac 12 \left(\tr^2\C_\phi-\tr\C_\phi^2\right)+\phi'^2\tr\C_\phi.
	\end{equation}
Since $\Cp$ is tensor on $\transla_3$, again by the Cayley-Hamilton theorem, it satisfies
\begin{equation}
\label{eq:C__H_for _C_phi}
\Cp^2-(\tr\Cp)\Cp+(\det\Cp)\mathbf{I}_2=\bm{0},
\end{equation} 
where $\mathbf{I}_2$ is the identity on $\transla_3$. Taking the trace of both sides of \eqref{eq:C__H_for _C_phi}, we obtain from \eqref{eq:I_2_C_phi} that
\begin{equation}
\label{eq:pre_I_2}
I_2=\det\Cp+\phi'^2\tr\Cp.
\end{equation}
The desired conclusion then follows from \eqref{eq:det_C_phi}, \eqref{eq:basis_properties}, and \eqref{eq:I_1}, since $\tr\Cp=I_1-\phi'^2$.
\qed
\end{proof}
In Sect.~\ref{sec-gent}, we shall consider materials that obey the incompressibility constraint, $I_3=1$. There, \eqref{eq:I_3} will turn into a differential equation that determines $\phi$.

In preparation for this, in the following section we refresh the preliminaries of differential geometry of surfaces in a way that avoids local charts of coordinates, but resorts instead to a number of vector fields, which describe the correspondence between local movable frames in the reference and current configurations of a material surface.

\section{Cartesian connectors}\label{sec:connectors}
Here, we introduce the notion of \emph{Cartesian connectors}, which in our view constitute a viable alternative to Christoffel symbols. In terms of these connectors, we reformulate the classical \emph{theorema egregium} of Gauss and the Codazzi-Mainardi compatibility conditions. 

We reformulate the essentials of the differential geometry of smooth surfaces embedded in three-dimensional space. For definiteness, we shall assume that the mapping $\y$ that deforms $S$ into $\surface$ is of class $\mathcal{C}^3$. 

Letting $(\m,\mper)$ be the \emph{right} principal directions, that is , the (normalized) eigenvectors of $\C$, and $(\n,\nper)$ the \emph{left} principal directions, that is, the (normalized) eigenvectors of $\B$, with corresponding principal stretches (common to both tensors) $\lambda_1>0$ and $\lambda_2>0$, we may represent $\nabla\y$, $\C$, and $\B$ as follows (see, for example, \cite[p.\,74]{gurtin:mechanics}),
\begin{subequations}\label{eq:representations}
\begin{align}
\nabla\y&=\lambda_1\n\otimes\m+\lambda_2\nper\otimes\mper,\label{eq:representation_nabla_y}\\
\C&=\lambda_1^2\m\otimes\m+\lambda_2^2\mper\otimes\mper,\label{eq:representation_C}\\
\B&=\lambda_1^2\n\otimes\n+\lambda_2^2\nper\otimes\nper.\label{eq:representation_B}
\end{align}	
\end{subequations}
We shall assume that the frames $\framem$ and $\framen$ are oriented so that $\e_3=\m\times\mper$ and $\normal=\n\times\nper$. It should be kept in mind that both $\m$ and $\mper$ lie in the $(x_1,x_2)$ plane; $\nabla$ denotes the two-dimensional gradient in this plane, whereas $\nablas$  denotes the surface gradient on $\surface$.

The \emph{connector} $\cv$ is a vector field in the plane such that
\begin{subequations}\label{eq:connector_c}
	\begin{align}
	\nabla\m&=\mper\otimes\cv,\label{eq:connector_c_1}\\
	\nabla\mper&=-\m\otimes\cv.\label{eq:connector_c_2}
	\end{align}
\end{subequations}
The existence of $\cv$ and the specific form of \eqref{eq:connector_c} follow from the requirement that the right principal directions  $(\m,\mper)$ be orthonormal everywhere on $S$.\footnote{The name \emph{connector} is inspired by the notion of \emph{spin connection} for surfaces (and manifolds) (see \cite{kamien:geometry}, for an effective introduction to the differential geometry useful in modelling soft matter). Here, we have applied \eqref{eq:connector_c} to the right principal directions; clearly, the same equations apply to any other pair of orthogonal directions.} Clearly, if $\m$ is known then $\cv$ is defined as $\cv:=(\nabla\m)\trans\mper$; on the other hand, if $\cv$ is assigned, at least locally, in the class $\mathcal{C}^1$ then $\m$ (and $\mper$) can be determined up to a rigid rotation by solving equations \eqref{eq:connector_c}. To this end, however, $\cv$ must be \emph{compatible}; it follows from the symmetry of both $\nablatwo\m$ and $\nablatwo\mper$ that the compatibility condition reads as
\begin{equation}\label{eq:symmetry_grad_c}
\nabla\cv-(\nabla\cv)\trans=\zero,
\end{equation}
which for a simply connected $S$ implies that $\cv=\nabla\Phi$, where $\Phi$ is an appropriate scalar potential. Since we assume that both $\m$ and $\mper$ are determined by $\y$, we shall here consider  $\cv$ as known and satisfying \eqref{eq:symmetry_grad_c}.

In complete analogy to the frame $\framem$ on $S$, 
we  describe the corresponding frame $\framen$ as a field of orthonormal directors on $\surface$. Equations \eqref{eq:connector_c} are generalized to
\begin{subequations}\label{eq:connectors_c_d}
	\begin{align}
	\nabla\n&=\nper\otimes\ca+\normal\otimes\da_1,\label{eq:nabla_n}\\
	\nabla\nper&=-\n\otimes\ca+\normal\otimes\da_2,\label{eq:nabla_n_perp}\\
	\nabla\normal&=-\n\otimes\da_1-\nper\otimes\da_2\label{eq:nabla_nu},
	\end{align}
\end{subequations}
where the connectors $\ca$, $\da_1$, and $\da_2$ are planar fields defined on $S$. A number of consequences for these fields follow from the integrability  condition that requires the second gradients of $\y$, $\n$, $\nper$, and $\normal$ to be symmetric: they are listed below.
\begin{enumerate}
	\item For the symmetry of $\nablatwo\y$ (in its last two legs), the following second-order tensors must be symmetric,
\begin{subequations}\label{eq:pre_symmetry_nabla2_y}
\begin{align}
\n\cdot\naty&=\m\otimes\nabla\lambda_1+\lambda_1\mper\otimes\cv-\lambda_2\mper\otimes\ca,\label{eq:symmetry_nabla2_y_1}\\
\nper\cdot\naty&=\lambda_1\m\otimes\ca+\mper\otimes\nabla\lambda_2-\lambda_2\m\otimes\cv,\label{eq:symmetry_nabla2_y_2}\\
\normal\cdot\naty&=\lambda_1\m\otimes\da_1+\lambda_2\mper\otimes\da_2.\label{eq:symmetry_nabla2_y_3}
\end{align}
\end{subequations}
Thus, for the symmetry of $\nablatwo\y$, it must be	
\begin{subequations}
		\begin{align}
		\lambda_1\cv\cdot\m-\nabla\lambda_1\cdot\mper-\lambda_2\ca\cdot\m&=0	,\label{eq:c_star_m_perp}\\
		\lambda_1\ca\cdot\mper-\lambda_2\cv\cdot\mper-\nabla \lambda_2\cdot\m&=0,\label{eq:c_star_m}\\
	\lambda_1\da_1\cdot\mper-\lambda_2\da_2\cdot\m	&=0.\label{eq:d_12_d21}
		\end{align}
	\end{subequations}
	In particular, \eqref{eq:c_star_m_perp} and \eqref{eq:c_star_m} can be combined together to yield 
	\begin{equation}\label{eq:c_star}
	\ca=\frac{1}{\sqrt{\det\C}}\C\cv-\frac{1}{\lambda_2}(\nabla \lambda_1\cdot\mper)\m+\frac{1}{\lambda_1}(\nabla\lambda_2\cdot\m)\mper.
	\end{equation}
	 By recalling \eqref{eq:representation_C},
	it becomes apparent from \eqref{eq:c_star} that $\ca$ is completely determined by $\cv$ and $\C$.
	\item Similarly, for the symmetry of $\nablatwo\n$, it must be
	\begin{subequations}
		\begin{align}
		\nabla\ca-(\nabla\ca)\trans&=\da_1\otimes\da_2-\da_2\otimes\da_1,\label{eq:skew_part_nabla_c_ast}\\
		\nabla\da_1-(\nabla\da_1)\trans&=\da_2\otimes\ca-\ca\otimes\da_2,\label{eq:skew_part_nabla_d_1}
		\end{align}
	\end{subequations}
	which can also be written in the equivalent forms
	\begin{subequations}
		\begin{align}
		\curl\ca&=\da_2\times\da_1,\label{eq:skew_part_nabla_c_ast_equiv}\\
		\curl\da_1&=\ca\times\da_2.\label{eq:skew_part_nabla_d_1_equiv}
		\end{align}
	\end{subequations}
	\item For the symmetry of $\nablatwo\nper$, \eqref{eq:skew_part_nabla_c_ast} is supplemented by
	\begin{equation}\label{eq:skew_part_d_2}
	\nabla\da_2-(\nabla\da_2)\trans=\ca\otimes\da_1-\da_1\otimes\ca,
	\end{equation}
	or its equivalent form
	\begin{equation}\label{eq:skew_part_d_2_equiv}
	\curl\da_2=\da_1\times\ca.
	\end{equation}
	\item Finally, the symmetry of $\nablatwo\normal$ is guaranteed by \eqref{eq:skew_part_nabla_d_1} and \eqref{eq:skew_part_d_2}.
\end{enumerate} 
The connectors $\da_1$ and $\da_2$ can be given a geometric interpretation by computing the curvature tensor $\nablas\normal$ of $\surface$.
It readily follows from \eqref{eq:representation_nabla_y} that 
\begin{equation}\label{eq:nabla_y_inverse}
(\nabla\y)^{-1}=\frac{1}{\lambda_1}\m\otimes\n+\frac{1}{\lambda_2}\mper\otimes\nper.
\end{equation}
Letting
\begin{equation}\label{eq:d_representation}
\da_1=d_{11}\m+d_{12}\mper,\qquad\da_2=d_{21}\m+d_{22}\mper,
\end{equation}
from \eqref{eq:nabla_y_inverse} we arrive at
\begin{equation}\label{eq:curvature_tensor}
\nablas\normal=(\nabla\normal)(\nabla\y)^{-1}
=-\left(\frac{d_{11}}{\lambda_1}\n\otimes\n+\frac{d_{12}}{\lambda_2}\n\otimes\nper+\frac{d_{21}}{\lambda_1}\nper\otimes\n+\frac{d_{22}}{\lambda_2}\nper\otimes\nper\right),
\end{equation}
which is duly symmetric, as by \eqref{eq:d_representation} equation \eqref{eq:d_12_d21} reduces to
\begin{equation}\label{eq:curvature_symmetry}
\lambda_2d_{21}=\lambda_1d_{12}.
\end{equation}
Both the mean curvature $H$ and the Gaussian curvature $K$ of $\surface$ can easily be derived from \eqref{eq:curvature_tensor}; they are given by
\begin{align}
H&=-\frac12\left(\frac{d_{11}}{\lambda_1}+\frac{d_{22}}{\lambda_2}\right),\label{eq:H}\\
K&=\frac{1}{\lambda_1\lambda_2}\left(d_{11}d_{22}-d_{12}d_{21}\right).\label{eq:K}
\end{align}

An important conclusion follows by combining \eqref{eq:skew_part_nabla_c_ast} and \eqref{eq:K} with the aid of \eqref{eq:d_representation}, namely
\begin{equation}\label{eq:nabla_c_star_skew_part}
\nabla\ca-(\nabla\ca)\trans=\lambda_1\lambda_2K(\m\otimes\mper-\mper\otimes\m).
\end{equation}
Since, as shown by \eqref{eq:c_star}, the left-hand side of \eqref{eq:nabla_c_star_skew_part} is determined by $\C$ (alongside its first and second spatial derivatives), so is $K$. In other words, the metric on $\surface$  determines the Gaussian curvature of $\surface$. This is the manifestation in our setting of the celebrated \emph{theorema egregium} of Gauss. Similarly, equations \eqref{eq:skew_part_nabla_d_1} and \eqref{eq:skew_part_d_2} are related to the Codazzi-Mainardi equations (see, for example, \cite[p.\,144]{stoker:differential}).

In the special case where both principal stretches are uniform in space (but not necessarily the principal directions of stretching), equation \eqref{eq:c_star} reduces to
\begin{equation}\label{eq:c_star_reduced}
\ca=\frac{\lambda_1}{\lambda_2}c_1\m+\frac{\lambda_2}{\lambda_1}c_2\mper,
\end{equation}
where $c_1:=\cv\cdot\m$ and $c_2:=\cv\cdot\mper$.
It follows 
from \eqref{eq:symmetry_grad_c}, \eqref{eq:c_star_reduced}, and the identities
\begin{equation}\label{eq:nabla_c_1_2_identities}
\nabla c_1=(\nabla\cv)\m+c_2\cv,\qquad\nabla c_2=(\nabla\cv)\mper-c_1\cv,
\end{equation}
that
\begin{equation}\label{eq:nabla_c_star_skew_part_pre}
\nabla\ca-(\nabla\ca)\trans=\left(\frac{\lambda_1}{\lambda_2}-\frac{\lambda_2}{\lambda_1}\right)(c_2^2-c_1^2+c_{12})(\m\otimes\mper-\mper\otimes\m),
\end{equation}
where we have set $c_{12}:=\m\cdot(\nabla\cv)\mper$.
A comparison with \eqref{eq:nabla_c_star_skew_part} readily helps us to conclude that 
\begin{equation}\label{eq:theorema_egregium_reduced}
K=\left(\frac{1}{\lambda_2^2}-\frac{1}{\lambda_1^2}\right)(c_2^2-c_1^2+c_{12}).
\end{equation}
It is not difficult to check that \eqref{eq:theorema_egregium_reduced} agrees completely with equation (22) of \cite{mostajeran:curvature}, which was deduced with the more traditional use of coordinates and Christoffel symbols.

As appealing as formulae \eqref{eq:H} and \eqref{eq:K} may be, they are not especially expedient to compute $H$ and $K$, for a given deformation $\y$, as the link between the latter and the connectors is rather intricate. In Appendix~\ref{sec:H_K}, we shall give other formulae for $H$ and $K$ valid for area-preserving deformations $\y$; they are more accessible to direct computation and also show the role played by the second gradient $\nablatwo\y$ in determining the principal curvatures of $\surface$.

\section{Incompressible elastomer plates}
\label{sec-gent}
In this section, we consider an incompressible elastomer plate, for which we assume that the mid surface $S$ is inextensible and the whole body $\slab$ is incompressible. That is, we assume that
\begin{equation}
\label{eq:detC=1}
\det\C=1\quad\text{and}\quad\det\Cf=1.
\end{equation}
Here we consider the former constraint as a remnant of the latter, the one that survives when, in the limit as $h\to0$, only stretching energy is associated with the membrane $S$ by an appropriate dimension reduction of the elastic energy stored in the three-dimensional body $\slab$.\footnote{In Remark~\ref{rem:determinant}, we shall see the consequences of relaxing the inextensibility constraint $\det\C=1$.}
  
\subsection{Polynomial approximation}
It is our desire to compute averages of the elastic  energy stored across the (small) thickness of the plate. To this end, it will suffice to represent $\phi$ as a polynomial in $x_3$. Using the expressions for the invariants of $\Cf$ presented in 
Sect.~\ref{sec-mod}, in the following proposition we shall identify this polynomial. 
\begin{proposition}
\label{prop-phip}
If we let
\begin{equation}\label{eq:phi_polynomial}
\phi(\x,x_3)=\alpha(\x)x_3+\beta(\x)x_3^2+\gamma(\x)x_3^3+O\left(x_3^4\right)\quad  \forall\, (\x,x_3)\in\slab,
\end{equation}
 which complies with \eqref{eq-orient} for $\alpha>0$, then \eqref{eq:detC=1} requires that 
 \begin{equation}\label{eq:alpha_beta_gamma}
    \left\{\begin{array}{l}
      \alpha(\x)=1,\\ 
      \beta(\x)=-H(\x),\\
      \gamma(\x)=\dfrac 13 \left(6H(\x)^2-K(\x)\right) ,
    \end{array}\right.
    \quad\forall\,\x\in\overline{S}. 
 \end{equation}
\end{proposition}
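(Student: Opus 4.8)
The plan is to convert the two constraints in \eqref{eq:detC=1} into a single scalar identity for the unknown coefficients by invoking Proposition~\ref{prop-I3}, and then to match powers of $x_3$ one order at a time. Setting $\det\C=1$ in \eqref{eq:I_3} and imposing $I_3=\det\Cf=1$ yields the relation
\begin{equation*}
\phi'^2\left(1+4H\phi+(4H^2+2K)\phi^2\right)=1+O\left(\phi^3\right),
\end{equation*}
which must hold identically in $x_3$ across the thickness. Since $\phi=O(x_3)$ by virtue of \eqref{eq-orient}, the neglected $O\left(\phi^3\right)$ terms do not contaminate any coefficient up to and including order $x_3^2$, so this identity is exactly what is needed to pin down $\alpha$, $\beta$, and $\gamma$.

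First I would insert the ansatz \eqref{eq:phi_polynomial}, together with its derivative $\phi'=\alpha+2\beta x_3+3\gamma x_3^2+O\left(x_3^3\right)$, and expand the left-hand side as a power series in $x_3$. The zeroth-order coefficient is simply $\alpha^2$, so matching against the constant $1$ forces $\alpha^2=1$; the sign condition $\alpha>0$ (inherited from $\phi'(\x,0)>0$) then selects $\alpha=1$. At first order the coefficient is $4H\alpha^3+4\alpha\beta$, and setting it to zero with $\alpha=1$ gives $\beta=-H$ at once. These first two orders are essentially immediate.

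The bookkeeping at second order is where the real work lies, and I expect it to be the main obstacle. Three distinct contributions combine at order $x_3^2$: the $x_3^2$ term of $\phi'^2$, namely $4\beta^2+6\alpha\gamma$; the cross term between the first-order piece of $\phi'^2$ and the first-order piece of the curvature bracket; and the $x_3^2$ term of the bracket itself, multiplied by $\alpha^2$. Collecting these gives the coefficient $20H\alpha^2\beta+(4H^2+2K)\alpha^4+4\beta^2+6\alpha\gamma$, and substituting the already-determined values $\alpha=1$, $\beta=-H$ reduces it to the linear equation $6\gamma-12H^2+2K=0$, whence $\gamma=\tfrac13(6H^2-K)$, exactly as claimed in \eqref{eq:alpha_beta_gamma}. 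The only care required is not to drop the mixed $20H\alpha^2\beta$ term, which arises from combining the linear piece of $\phi'^2$ with the linear piece of the curvature factor; once every cross term in the product is accounted for, the rest is a routine algebraic simplification.
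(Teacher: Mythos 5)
Your proposal is correct and follows the same route as the paper: substitute the polynomial ansatz into the constraint $\phi'^2\left(1+4H\phi+(4H^2+2K)\phi^2\right)=1$ obtained from Proposition~\ref{prop-I3} under \eqref{eq:detC=1}, and match coefficients of $x_3^0$, $x_3^1$, $x_3^2$, using $\alpha>0$ to fix the sign. Your order-by-order bookkeeping (including the combined $20H\alpha^2\beta$ cross term) is accurate and simply makes explicit the computation the paper leaves to the reader.
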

\begin{proof}
By Prop.~\ref{prop-I3}, the constraints in \eqref{eq:detC=1} reduce to the equation 
\begin{equation}
\label{eq:phi_constraint}
\phi'^2\left(1+4H\phi+(4H^2+2K)\phi^2\right)=1,
\end{equation}
where $\phi$ is as in \eqref{eq:phi_polynomial}. 
The desired result then follows by identifying in \eqref{eq:phi_constraint} the coefficients of equal powers of $x_3$ up to $x_3^3$, and recalling that $\alpha>0$.
\qed
\end{proof}
\begin{remark}
The asymptotic expansion for  $\phi$ presented in Prop.~\ref{prop-phip} 
is consistent with a $\mathcal C^3$-regularity for $\phi$ in $x_3$.
As this hypothesis requires more regularity that that envisaged in Sect.~\ref{sec-mod}, this is clearly a particuliar case of the framework described there.
\end{remark}
\begin{remark}
The classical Kirchhoff-Love hypothesis, which requires $\phi\equiv x_3$, can thus be envisaged as the lowest approximation to $\phi$ in \eqref{eq:phi_polynomial}. The higher order approximation represented by \eqref{eq:alpha_beta_gamma} entails a local dependence of the thickness $2h^\ast$ of the deformed plate on the invariant measures of curvature for $\surface$. Explicitly, this coupling is given by
\begin{equation}
\label{eq:thickness_deformed_plate}
2h^\ast=\int_{-h}^h|\F\e_3|\dd x_3=\int_{-h}^h\phi'\dd x_3=2h+2h^3(6H^2-K)+O(h^5).
\end{equation}	
\end{remark}	

\subsection{Gent's material}
The elastic energy stored in a plate made of Gent's material is given by 
\begin{equation}
\label{eq-gent}
W_\mathrm{G}:=-\frac{\mu}{2}J_m\ln\left(1-\frac{I_1-3}{J_m}\right),
\end{equation}
where $I_1=\tr\Cf$ is the first invariant of $\Cf$,  and $\mu$ and $J_m$ are positive material constants, which can be identified with
a shear modulus and a stiffening parameter, respectively.
The role of the latter is illuminated by the request that
\begin{equation}
\label{eq:request}
I_1<J_m+3,
\end{equation}
to which $I_1$ must be subjected for $W_\mathrm{G}$ to be meaningful.

Gent's constitutive law \eqref{eq-gent} was first proposed in \cite{Gen-1996}; it represents the simplest mathematical model for rubber elasticity that accounts for the limited extensibility of the polymeric chains constituting these materials. There is a vast literature on microscopic and phenomenological theories for rubber-like materials based on limited molecular extensibility, for which Beatty \cite{beatty:radial} coined the name of \emph{limited} or \emph{restricted} elastic models; we refer the reader to the emphatic review \cite{Hor-2015}, which focuses on Gent's material.

Taking the limit as $J_m\to\infty$ in \eqref{eq-gent}, we give $W_\mathrm{G}$ the form
\begin{equation}
\label{eq:neo_Hookean}
W_\mathrm{nH}:=\frac12\mu(I_1-3),
\end{equation}  
which is the celebrated neo-Hookean stored energy density, a special case of the Mooney-Rivlin formula,
\begin{equation}
\label{eq:Mooney_Rivlin}
W_\mathrm{MR}:=\frac12\mu[\chi(I_1-3)+(1-\chi)(I_2-3)],
\end{equation}
where $0<\chi\leqq1$ is a dimensionless parameter.\footnote{Clearly, \eqref{eq:Mooney_Rivlin} reduces to \eqref{eq:neo_Hookean} for $\chi=1$.} Neither $W_\mathrm{nH}$ nor $W_\mathrm{MR}$ are capable of describing the severe stiffening that occurs even at moderates stretches for soft biological membranes \cite{holzapfel:similarities}, whereas \eqref{eq-gent} is.

Building upon early work \cite{zuniga:constitutive} that had extended the long-standing tradition of statistical theories for ideal molecules constituted by freely joined  rigid links subject to a non-Gaussian distribution for the end-to-end distance,\footnote{This tradition begun with the works of Kuhn~\cite{kuhn:beziehungen} and Kuhn and Gr\"un~\cite{kuhn:beziehungen_a} (a wider selection of early studies can be found in Treloar's book \cite{treloar:non-linear_third}); in particular, the  paper by Wang and Guth~\cite{wang:statistical} was the starting point of \cite{beatty:average}.} Beatty~\cite{beatty:average} motivated a constitutive law for rubber elasticity depending only on $I_1$ and incorporating the stiffening phenomena associated with a limited extensibility of the constituting chains. It was also shown in \cite{horgan:molecular} that \eqref{eq-gent}, which has a genuine phenomenological origin, is a very accurate approximation to Beatty's molecular-based constitutive law; it retraces all qualitative features of the latter and it reproduces its quantitative predictions, with the advantage of being mathematically simpler, even amenable to explicit, closed-form solutions. As shown in \cite{horgan:molecular}, $\mu$ and $J_m$ can also be related to the molecular model by 
\begin{equation}
\label{eq:molecular_estimates}
\mu=nkT\quad\text{and}\quad J_m=3(N-1),
\end{equation}
where $n$ is the number density (per unit volume) of molecular chains, $N$ is th number of links in each chain, $k$ is the Boltzmann constant, and $T$ the absolute temperature. 

Here, we wish to show how stretching and bending energies are blended together in a thin sheet of Gent's rubber-like material complying with \eqref{eq:detC=1} and \eqref{eq:request}. This result will be achieved in Prop.~\ref{prop-main} by integrating $W_\mathrm{G}$ across the thickness of the plate using the polynomial expression for $\phi$ found in Prop.~\ref{prop-phip}.  
\begin{proposition}
	\label{prop-main}
Let $\phi$ be of class $\mathcal{C}^3$ in $x_3$, so that it can be expressed as in Prop.~\ref{prop-phip} and let the constraints \eqref{eq:detC=1} be enforced.
	Then, the following expression is valid for all $\x\in\overline{S}$,
	\begin{subequations}
	\begin{equation}
		w_\mathrm{G}(\x):=\int_{-h}^h W_\mathrm{G}(\x,x_3){\rm d}x_3
		=
		hw_s(\x)+h^3w_b(\x)+O\left(h^5\right),\label{eq:w_G}
		\end{equation}
		where
		\begin{align}
		w_s(\x)&:=-\mu J_m \ln\left(1-\frac{\tr\C(\x)-2}{J_m}\right),\label{eq:w_G_s}
		\\
		w_b(\x)&:= \frac{\mu }{3}J_m
		\left[
		2\left(\frac{\tr\left(\B(\x)\nablas\normal(\x)\right)-2H(\x)}{J_m-\left(\tr\C(\x)-2\right)}\right)^2
		+
		\frac{16H(\x)^2-K(\x)\left(\tr\C(\x)+2\right)}{J_m-\left(\tr\C(\x)-2\right)}
		\right]
		\label{eq:w_G_b}
	\end{align} 
	\end{subequations}
\end{proposition}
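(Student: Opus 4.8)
The plan is to Taylor-expand $W_\mathrm{G}(\x,\cdot)$ in the transverse variable $x_3$ about $x_3=0$ and integrate term by term over $[-h,h]$. Since the domain is symmetric, all odd powers of $x_3$ integrate to zero, so that, writing $W_\mathrm{G}(\x,x_3)=f_0(\x)+f_1(\x)x_3+f_2(\x)x_3^2+\dots$, one gets $\int_{-h}^h W_\mathrm{G}\,\dd x_3 = 2h f_0 + \tfrac23 h^3 f_2 + O(h^5)$; the $O(h^5)$ error collects the even powers $x_3^4$ and higher, which is legitimate because $W_\mathrm{G}$ is a smooth function of $I_1$ on the admissible range fixed by \eqref{eq:request} and $\phi$ is expandable as in Prop.~\ref{prop-phip}. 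The proposition therefore reduces to computing the two coefficients $f_0$ and $f_2$, and I expect to find $w_s=2f_0$ and $w_b=\tfrac23 f_2$.

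First I would expand $I_1$ to second order in $x_3$. Substituting the polynomial \eqref{eq:phi_polynomial} with coefficients \eqref{eq:alpha_beta_gamma} into the formula \eqref{eq:I_1} of Prop.~\ref{prop:I_1}, and expanding $\phi'^2$ most cleanly from the incompressibility relation \eqref{eq:phi_constraint} as $\phi'^2 = (1+4H\phi+(4H^2+2K)\phi^2)^{-1}=1-4Hx_3+(16H^2-2K)x_3^2+O(x_3^3)$, I would collect powers of $x_3$. The value at $x_3=0$ is $I_1=\tr\C+1$; the coefficient of $x_3$ is $2(\tr(\B\nablas\normal)-2H)$; and in the coefficient of $x_3^2$ the genuine-stretch contribution of the middle term of \eqref{eq:I_1} cancels, leaving $16H^2-K(\tr\C+2)$. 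In short, $I_1-3=(\tr\C-2)+2(\tr(\B\nablas\normal)-2H)x_3+(16H^2-K(\tr\C+2))x_3^2+O(x_3^3)$.

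Next I would compose with the Gent logarithm of \eqref{eq-gent}. Abbreviating $P:=J_m-(\tr\C-2)$ and factoring the value at $x_3=0$ out of $1-(I_1-3)/J_m$, the argument of the logarithm becomes $(P/J_m)(1-v)$ with $v=O(x_3)$ collecting the linear and quadratic corrections. Using $\ln(1-v)=-v-\tfrac12 v^2+O(x_3^3)$, the constant term reproduces $f_0=-\tfrac{\mu}{2}J_m\ln(1-(\tr\C-2)/J_m)$, hence $w_s=2f_0$ as claimed in \eqref{eq:w_G_s}. The quadratic coefficient $f_2$ receives two contributions: the linear-in-$v$ piece supplies $\tfrac{\mu J_m}{2}(16H^2-K(\tr\C+2))/P$ from the $x_3^2$-coefficient of $I_1$, while the $-\tfrac12 v^2$ piece supplies $\tfrac{\mu J_m}{4}(2(\tr(\B\nablas\normal)-2H))^2/P^2$ from the square of the $x_3$-coefficient of $I_1$. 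Multiplying by $\tfrac23$ and simplifying then yields exactly \eqref{eq:w_G_b}.

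The main obstacle is conceptual rather than technical, and it lives in this last step: the bending coefficient $w_b$ is \emph{not} simply the rescaled second $x_3$-derivative of $I_1$ fed through the logarithm, because composing a nonlinear constitutive function with $I_1(x_3)$ makes the quadratic Taylor coefficient pick up the \emph{square} of the linear coefficient of $I_1$. It is precisely this $v^2$-term that injects the stretching measure $\tr(\B\nablas\normal)-2H$ quadratically into the bending energy, producing the first summand of \eqref{eq:w_G_b} and realizing the announced blending of stretching and bending. The only bookkeeping care needed beforehand is the cancellation in the $x_3^2$-coefficient of $I_1$, so that no spurious stretch term contaminates the $K$-term, together with the correct second-order expansion of $\phi'^2$ enforced by incompressibility.
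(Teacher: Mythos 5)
Your proposal is correct and follows essentially the same route as the paper's proof: substitute the polynomial $\phi$ of Prop.~\ref{prop-phip} into \eqref{eq:I_1}, expand $W_\mathrm{G}$ to order $x_3^2$ (the paper's intermediate formula \eqref{eq:pre_w_G} is exactly your $f_0+f_1x_3+f_2x_3^2$), and integrate across the thickness, with odd powers dropping out by symmetry. Your explicit expansion $I_1-3=(\tr\C-2)+2(\tr(\B\nablas\normal)-2H)x_3+(16H^2-K(\tr\C+2))x_3^2+O(x_3^3)$ and the identification of the $v^2$-term as the source of the quadratic stretching measure in $w_b$ are precisely the ``standard computations'' the paper leaves implicit.
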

\begin{proof}
By inserting \eqref{eq:I_1} into \eqref{eq-gent} and making use of \eqref{eq:phi_polynomial} and \eqref{eq:alpha_beta_gamma}, we expand $W_\mathrm{G}$ in powers of $x_3$ up to $x_3^2$; standard computations lead us to
\begin{equation}\label{eq:pre_w_G}
\begin{split}
	W_\mathrm{G}=\frac{\mu}{2}J_m&\left[-\ln\left(1-\frac{\tr\C-2}{J_m}\right)
	+2\frac{\tr\left(\B\nablas\normal\right)-2H}{J_m-\left(\tr\C-2\right)}x_3
	\right.
	\\
	&\left.+
	\left(
	2\left(\frac{\tr\left(\B\nablas\normal\right)-2H}{J_m-\left(\tr\C-2\right)}\right)^2
	+
	\frac{16H^2-K\left(\tr\C+2\right)}{J_m-\left(\tr\C-2\right)}
	\right) x_3^2
	\right]
	+O\left(x_3^3\right),
	\end{split}
\end{equation}
where the dependence on $\x$ has been omitted to avoid clutter. Integrating this expression for $W_\mathrm{G}$ across the thickness of the plate, we reach our desired conclusion.
\qed	
\end{proof}
Proposition~\ref{prop-main} is the main result of this section.
The quantities $hw_s $ and $h^3w_3$ introduced in \eqref{eq:w_G}  are interpreted as the stretching and bending elastic energy-densities (per unit area) of Gent's plates. Following \cite{efrati:elastic}, we shall call $w_s$ and $w_b$ the stretching and bending \emph{contents} of Gent's elastic energy, respectively. 
\begin{remark}\label{rem:geometric_elastivity}
There are apparent similarities between the expression for the plate's surface energy density arrived at in Prop.~\ref{prop-main} 	and the elastic energy densities posited in geometric elasticity (see, for example, \cite{efrati:elastic,efrati:buckling,efrati:metric,armon:geometry}), but there are also marked differences. Geometric elasticity of plates (and shells) does blend together stretching and bending energies, which scale with different powers of $h$; the former, like $w_s$ in \eqref{eq:w_G_s}, is of a pure metric nature, while the latter is of a pure curvature nature, unlike $w_b$ in \eqref{eq:w_G_b}, where metric and curvature measures are combined together in an invariant way.
\end{remark}
In the vanishing thickness limit, that is, as $h\to0$, if both $w_s$ and $w_b$ stay bounded, the stretching energy prevails over the bending energy and provides the leading deformation mechanism; for $h$ sufficiently small, we may consider the bending energy as a perturbation to the stretching energy, to be minimized, as it were, at a second stage, on the minimizers of the latter.	
\begin{remark}\label{rem:isometries}
Among all tensors $\C$ on $\transla_3$ such that $\det\C=1$, the stretching content $w_s$ in \eqref{eq:w_G_s} attains its minimum at $\C=\mathbf{I}_2$, which is its unique minimizer.	Indeed, letting $\lambda_2=\frac{1}{\lambda_1}$ in \eqref{eq:representation_C}, we can write $w_s$ as
\begin{equation}
\label{eq:w_G_s_reduced}
\widehat{w}_s=-\mu J_m\ln\left(1-\frac{1}{J_m}\left(\lambda_1-\frac{1}{\lambda_1}\right)^2\right),
\end{equation}
which attains its unique minimum for $\lambda_1=1$, where $w_s$ vanishes. Thus, in the absence of obstructive boundary conditions and external forces, the surfaces $\surface$ that minimize $w_s$ are isometric immersions of $S$ in three-dimensional space. By \eqref{eq:theorema_egregium_reduced}, all such surfaces have $K=0$. Further minimizing $w_b$ on these immersions amounts at minimizing
\begin{equation}
\label{eq:w_G_b_reduced}
\widehat{w}_b=\frac{16}{3}\mu H^2,
\end{equation}
which is the form (for $K=0$) of the energy density featuring in Helfrich's functional for flexible vesicles \cite{helfrich:elastic}.
\end{remark}	
The two-step minimization outlined in Remark~\ref{rem:isometries} is clearly highly hypothetical, for at least two reasons. First, boundary conditions and external forces are always present and are responsible for shaping the equilibrium configurations of plates (especially, elastomer plates, which are more responsive to mechanical stimuli). Second, the representations of stretching and bending contents in \eqref{eq:w_G_s_reduced} and \eqref{eq:w_G_b_reduced}  miss the main points of the full-blown representations in \eqref{eq:w_G_s} and \eqref{eq:w_G_b}, that is, that the measures of stretch influences the bending content as well and that both contents are \emph{blended} together in \eqref{eq:w_G} in a way that depends on $h$ and may give rise to interesting instability scenarios driven by the plate's thickness. The limiting forms of the stretching and bending contents such as $\widehat{w}_s$ and $\widehat{w}_b$ remain however indicative and will also be used in the following section to establish contact with the $\Gamma$-convergence branch of literature in this field, where those limit have been rigorously established for a number of models.
\begin{remark}\label{rem:coipling}
When stretching and bending energies compete one against the other for an equilibrium, the way the surface is stretched affects its response to bending. The coupling between the two energies is not only conveyed through $\tr\C$, the sum of the principal stretches, it also involves the relative orientation of the eigenframes of the curvature tensor $\nablas\normal$ and left Cauchy-Green tensor $\B$.

Letting the former be represented as in \eqref{eq:curvature_representation} and the latter as in \eqref{eq:representation_B}, with $\lambda_2=\frac{1}{\lambda_1}$, we can write
\begin{equation}
\begin{split}
\label{eq:eigenframe_coupling}
w_\varphi&:=[\tr(\B\nablas\normal)-2H]^2=\tr^2[(\B-\mathbf{I}_2)\nablas\normal]\\
&=\left[\left(\kappa_1\lambda_1^2+\frac{\kappa_2}{\lambda_1^2}-\kappa_1-\kappa_2\right)\cos^2\varphi+\left(\kappa_2\lambda_1^2+\frac{\kappa_1}{\lambda_1^2}-\kappa_1-\kappa_2\right)\sin^2\varphi\right]^2,
\end{split}
\end{equation}
where $\varphi$ is the angle that $\n$ makes with $\bm{n}_1$. This is the only contribution to $w_b$ that depends on $\varphi$. Even for given $\lambda_1$, $\kappa_1$, and $\kappa_2$, minimizing $w_\varphi$ is not trivial. While $w_\varphi$ is independent of $\varphi$ in the special case that $\lambda_1=1$ or $\kappa_1=\kappa_2$, in general, it has always two stationary points at $\varphi=0$ and $\varphi=\frac\pi2$, which are somehow expected, as there $\B$ and $\nablas\normal$ share the same eigenframe. However, another pairs of stationary points may arise, for which
\begin{equation}
\label{eq:extra_stationary_pair}
\tan^2\varphi=-\frac{\kappa_1\lambda_1^4-(\kappa_1+\kappa_2)\lambda_1^2+\kappa_2}{\kappa_2\lambda_1^4-(\kappa_1+\kappa_2)\lambda_1^2+\kappa_1},
\end{equation}
provided that $\lambda_1$, $\kappa_1$, and $\kappa_2$ make the right-hand side of \eqref{eq:extra_stationary_pair} positive. These extra stationary points, when they exist, make $w_\varphi$ vanish, so that it attains its infimum. This shows that at equilibrium the relative orientation of $\B$ and $\nablas\normal$ may give rise to interesting patterns on $\surface$.
\end{remark}	  
\begin{remark}\label{rem:determinant}
We have assumed at the start of this section that $S$ is inextensible and thus $\C$ is subject to $\det\C=1$. This constraint can be easily relaxed, while still enforcing $\det\Cf=1$. A few changes occur in our analysis, which otherwise proceeds unaltered. We record here these changes for the interested reader. The polynomial representation formula for $\phi$ in Prop.~\ref{prop-phip} becomes
\begin{equation}
\label{eq:polynomial_phi_new_formula}
\phi=\frac{1}{\sqrt{\det\C}}\left(x_3-\frac{H}{\sqrt{\det\C}}x_3^2+\frac13\frac{6H^2-K}{\det\C}x_3^3 \right),
\end{equation}	
while the expressions for $w_s$ and $w_b$ in \eqref{eq:w_G_s} and \eqref{eq:w_G_b} are to be replaced by
\begin{equation}
\label{eq:w_G_s_new}
w_s=-\mu J_m\ln\left(1-\frac{\det\C(\tr\C-3)+1}{J_m\det\C} \right)
\end{equation}
and
\begin{equation}
\label{eq:w_G_b_new}
w_b=\frac13\mu J_m\frac{1}{\det\C}\Bigg[2\left(\frac{\det\C\tr(\B\nablas\normal)-2H}{\det\C(J_m-\tr\C+3)-1}\right)^2+\frac{16H^2-K(\det\C\tr\C+2)}{\det\C(J_m-\tr\C+3)-1}
 \Bigg],
\end{equation}
respectively. It is a simple matter to check that for $\det\C=1$ equations \eqref{eq:polynomial_phi_new_formula}, \eqref{eq:w_G_s_new}, and \eqref{eq:w_G_b_new} reproduce the corresponding formulae derived above.
\end{remark}
The great advantage offered by the incompressibility constraint $\det\Cf=1$  (and amply exploited in this section)is to determine $\phi$ directly on kinematic grounds, as shown in Prop.~\ref{prop-phip}. For compressible materials, this advantage is lost. We need a different criterion to determine $\phi$. In the following section, we shall show that such a criterion can be found in minimizing the elastic energy stored in the plate, for a given deformation $\y$ of the mid surface $S$.

\section{Compressible plates}
\label{sec-compres}
In this section, we  apply  of the method presented in Sect.~\ref{sec-mod} to 
 compressible materials. We shall show how the modified Kirchhoff-Love  hypothesis purported in this paper  actually entails non-trivial normal strains for a compressible plate. Our analysis, which again is not confined to small strains, will conduce to a blending of stretching and bending energies. To ease the comparison between these latter energies and those already proposed in the literature (mostly for small strains), we shall also consider the small-strain limit for both examples we treat in detail below. In one case, we shall derive a Koiter-like potential \cite{koiter:consistent,koiter:nonlinear,koiter:foundations_I,koiter:foundations_II}  for the Ciarlet-Geymonat material \cite{CiaGey-1982}; this potential is also shown to agree with that recently derived in \cite{CiaMad-2018} for the same material. In the other case, we recover  the bending energy derived in \cite{friesecke:theorem} as a rigorous $\Gamma$-limit on isometries for a variant of the Saint-Venant-Kirchhoff material. 
  
\subsection{The Ciarlet-Geymonat material}
\label{koiter}
Ciarlet and Geymonat~\cite{CiaGey-1982} introduced a general class of hyperelastic  potentials intended to provide an extension to compressible materials of the Mooney-Rivlin stored energy (see, for example, p.\,189 of \cite{Cia-1988}). Here we shall consider a special example of this general class of materials, for which the stored elastic energy is
\begin{equation}\label{eq-CG}
W_\mathrm{CG}:=aI_1+bI_3-\frac12c\ln I_3+d,
\end{equation}
where $a>0$, $b>0$, $c>0$, and $d$ are material constants. Letting
\begin{equation}
\label{eq:definition_E_f}
\Ef:=\frac12(\Cf-\mathbf{I})
\end{equation}
denote the Green-Saint-Venant \emph{strain tensor} (see, for example, \cite[p.\,70]{gurtin:mechanics}), we show now that, for strains of sufficiently small norm $\vert\E_f\vert$, $W_\mathrm{CG}$ can be given the classical form for the stored elastic energy of isotropic materials, with Lam\'e coefficients, $\lambda$ and $\mu$, appropriately related to the material constants in \eqref{eq-CG}. 
\begin{proposition}
\label{prop-small}
In the limit of small strains the energy $W_\mathrm{CG}$ in \eqref{eq-CG} can be given the form
\begin{equation}\label{eq:W_CG_linearized}
W_\mathrm{CG}=\frac\lambda 2\tr^2\Ef+\mu\tr\Ef^2+O\left(\vert \Ef\vert^3\right),
\end{equation}
provided we set 
\begin{equation}
\label{eq:a_b_c}
\lambda=4b,\quad \mu=2a,\quad
c=2(a+b),\quad d=-(3a+b).
\end{equation}
\end{proposition}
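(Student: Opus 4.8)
The plan is to rewrite both invariants $I_1$ and $I_3$ in terms of the strain $\Ef$ via the substitution $\Cf=\I+2\Ef$ coming directly from \eqref{eq:definition_E_f}, expand everything to second order in $|\Ef|$, and then match coefficients against the target quadratic form. The three relations to be verified fall out of this matching: the vanishing of the constant and linear terms pins down $d$ and $c$, while the two surviving quadratic coefficients identify $\lambda$ and $\mu$.

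First I would record the exact identity $I_1=\tr\Cf=3+2\tr\Ef$, which is linear in $\Ef$ with no error term; here $\tr\I=3$ precisely because $\Cf$ is a genuine three-dimensional tensor, by \eqref{eq:C_f_definition}. Next I would expand the determinant through the standard three-dimensional formula $\det(\I+\A)=1+\tr\A+\frac12(\tr^2\A-\tr\A^2)+\det\A$ with $\A=2\Ef$. Since $\det(2\Ef)=8\det\Ef=O(|\Ef|^3)$, this yields $I_3=1+2\tr\Ef+2\tr^2\Ef-2\tr\Ef^2+O(|\Ef|^3)$.

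The one genuinely delicate step is the logarithm, and it is where I expect an error would most easily creep in, so I would treat it carefully. Writing $I_3=1+x$ with $x=2\tr\Ef+2\tr^2\Ef-2\tr\Ef^2+O(|\Ef|^3)$ and using $\ln(1+x)=x-\frac12 x^2+O(x^3)$, I would observe that $x^2=4\tr^2\Ef+O(|\Ef|^3)$, so the quadratic piece $2\tr^2\Ef$ inside $x$ is exactly cancelled by $-\frac12 x^2=-2\tr^2\Ef$. This leaves the clean expression $\ln I_3=2\tr\Ef-2\tr\Ef^2+O(|\Ef|^3)$, and I would double-check this cancellation before proceeding.

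Finally I would substitute these expansions into \eqref{eq-CG} and group by order in $\Ef$. The constant contribution is $3a+b+d$ and the linear contribution is $(2a+2b-c)\tr\Ef$; both must vanish for $W_\mathrm{CG}$ to reduce to a purely quadratic form, which forces $d=-(3a+b)$ and $c=2(a+b)$, reproducing two of the relations in \eqref{eq:a_b_c}. The remaining quadratic part is $2b\,\tr^2\Ef+(c-2b)\tr\Ef^2$; comparing it with $\frac\lambda2\tr^2\Ef+\mu\tr\Ef^2$ and inserting $c=2(a+b)$ gives $\lambda=4b$ and $\mu=c-2b=2a$, the last two relations in \eqref{eq:a_b_c}. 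Since every remaining term is cubic or higher in $\Ef$, this completes the identification and establishes \eqref{eq:W_CG_linearized}. The argument is essentially just a bookkeeping exercise in Taylor expansion, so the only real obstacle is keeping the second-order terms consistent across the three contributions.
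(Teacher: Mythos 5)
Your proposal is correct and follows essentially the same route as the paper: the paper's proof consists precisely of stating the expansions $I_1=3+2\tr\Ef$ and $I_3=1+2\tr\Ef+2(\tr^2\Ef-\tr\Ef^2)+O(|\Ef|^3)$ and substituting them into \eqref{eq-CG}. You simply carry out explicitly the logarithm expansion and coefficient matching that the paper leaves to the reader, and all your computations check out.
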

\begin{proof}
It suffices to make use in \eqref{eq-CG} of the following equations
\begin{subequations}\label{eq:CG_identities}
\begin{align}
I_1&=3+2\tr\Ef,
\\
I_3&=1+2\tr\Ef+2\left(\tr^2\Ef-\tr\Ef^2\right)+O\left(\vert \Ef\vert^3\right).
\end{align}
\end{subequations}
\qed
\end{proof}
Here, we continue to represent the function $\phi$ as in \eqref{eq:phi_polynomial}.
However, no kinematic constraint will determine the functions  $\alpha(\x)$ and $\beta(\x)$; we need an alternative criterion, which we indentify in minimizing separately the two lowest orders in $h$ of the elastic energy integrated across the plate's thickness, for a given deformation $\y$ of the mid surfaces $S$. Hereafter, to improve clarity, the dependence on the in-plane variable $\x$ will be  omitted. 
\begin{proposition}
\label{prop-av}
Let $\phi$ be given as 
$\phi(x_3)=\alpha x_3+\beta x_3^2+\gamma x_3^3+O(x_3^3)$, with $\alpha>0$ to ensure local orientability to $\f$ in \eqref{eq:f_definition}. For $W_\mathrm{CG}$ as in \eqref{eq-CG}, the minimum energy density (per unit area) that can be attributed to $S$ is represented as 
\begin{equation}\label{eq:w_CG}
w_\mathrm{CG}:=\int_{-h}^h W_\mathrm{CG} {\rm d}x_3=h w_1+h^3 w_3 +O(h^5),
\end{equation}
where
\begin{subequations}\label{eq:w_CG_1_3}
\begin{align}
w_1&=2\left[a\tr\C+(a+b)\left(1-\ln\frac{(a+b)\det\C}{a+b\det\C}\right)-(3a+b)\right],\label{eq:w_1}
\\
w_3&=\frac13\frac{a(a+b)^2(32b\det\C+7a)}{(a+\det\C)^3}H^2+\frac53\frac{a^2(a+b)}{(a+\det\C)^2}b_1H\nonumber\\
&-\frac23\frac{a(a+b)[(a+\det\C)\tr\C+2(a+b)]}{(a+b\det\C)^2}K-\frac{1}{12}\frac{a^2}{a+\det\C}b_1^2,\label{eq:w_3}
\end{align}
and $b_1=\tr\left(\B\nablas\normal\right)$. Correspondingly, $\alpha$ and $\beta$ are determined as
\begin{align}
	\alpha&=\sqrt{\frac{a+b}{a+b\det\C}}\label{eq:alpha_min}
	\\
	\beta&= -\frac{a }{8(a+b\det\C)}b_1+ \frac{(a+b)(a-4b\det\C)}{4(a+b\det\C)^2}H\label{eq:beta_min}
\end{align}
\end{subequations}
\end{proposition}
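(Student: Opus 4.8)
The plan is to expand the integrand $W_\mathrm{CG}$ as a power series in $x_3$ up to order $x_3^2$, integrate over $[-h,h]$ (so that $\int_{-h}^h x_3\,\dd x_3=0$ and $\int_{-h}^h x_3^2\,\dd x_3=\frac23 h^3$ kill the odd terms), and then minimize the resulting coefficients order by order. Writing $W_\mathrm{CG}=W_0+W_1 x_3+W_2 x_3^2+O(x_3^3)$, integration gives $w_\mathrm{CG}=2hW_0+\frac23 h^3 W_2+O(h^5)$, so I must identify $w_1=2W_0$ and $w_3=\frac23 W_2$. The inputs are \eqref{eq:I_1} for $I_1$, \eqref{eq:I_3} (equivalently \eqref{eq:pre_I_3}) for $I_3$, the expansion $\ln I_3=2\ln\phi'+\ln\det\C+\ln\!\big(1+4H\phi+(4H^2+2K)\phi^2\big)+O(\phi^3)$, and the polynomial \eqref{eq:phi_polynomial}, from which $\phi'^2=\alpha^2+4\alpha\beta x_3+(4\beta^2+6\alpha\gamma)x_3^2+O(x_3^3)$. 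Throughout I use the calibration \eqref{eq:a_b_c}, so that $c=2(a+b)$ and $d=-(3a+b)$.

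For the leading order I evaluate at $x_3=0$: since $\phi(0)=0$ and $\phi'(0)=\alpha$, only $\phi'^2(0)=\alpha^2$ survives, giving $W_0=a(\tr\C+\alpha^2)+b\alpha^2\det\C-\frac{c}{2}(2\ln\alpha+\ln\det\C)+d$, a function of $\alpha$ alone. Minimizing over $\alpha>0$ yields the stationarity condition $2\alpha^2(a+b\det\C)=c$, whence \eqref{eq:alpha_min}. Substituting this relation back into $w_1=2W_0$ collapses the quadratic terms to $c/2=a+b$ and, after rewriting $-2\ln\alpha=\ln\frac{a+b\det\C}{a+b}$, reproduces \eqref{eq:w_1}.

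The decisive structural observation handles $\gamma$. The coefficient $\gamma$ enters $W_2$ only through $\phi'^2$ (via its $x_3^2$-term $6\alpha\gamma$) and through $2\ln\phi'$ (via its $x_3^2$-term $\frac{3\gamma}{\alpha}$); collecting these contributions from $aI_1$, $bI_3$, and $-\frac12 c\ln I_3$ gives the $\gamma$-coefficient $6a\alpha+6b\alpha\det\C-\frac{3c}{\alpha}=3\big(2a\alpha+2b\alpha\det\C-\frac{c}{\alpha}\big)$, which is exactly three times the stationarity condition for $W_0$. Hence at the optimal $\alpha$ from \eqref{eq:alpha_min} the cubic coefficient $\gamma$ drops out of $W_2$ entirely, which is what legitimizes the sequential minimization. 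The order-$h^3$ energy then depends only on the (now fixed) $\alpha$ and on $\beta$, and $W_2$ is a quadratic polynomial in $\beta$: the $\beta^2$-term comes from the $4\beta^2$ in $\phi'^2$, and the $\beta$-linear terms from the $x_3^2$-coefficient $\beta$ of $\phi$ multiplying the $\phi$-linear parts of $I_1$, $I_3$, and $\ln I_3$, together with the cross terms between $\phi'^2$ and $\phi$. Completing the square in $\beta$ gives the minimizer \eqref{eq:beta_min}, and substituting it back produces $w_3=\frac23 W_2$ in the form \eqref{eq:w_3}.

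The clean parts are the two structural simplifications — the collapse of $w_1$ via the $\alpha$-stationarity and the vanishing of the $\gamma$-coefficient. The main obstacle is the bookkeeping in assembling $W_2$: one must Taylor-expand both $\ln\!\big(1+4H\phi+(4H^2+2K)\phi^2\big)$ and $\ln\phi'$ to second order, track every cross term between $\phi'^2$ and the curvature-weighted $\phi$ and $\phi^2$ contributions, and then carry out the quadratic minimization in $\beta$. It is this algebra that produces the intricate rational expressions in $\det\C$, $H$, $K$, $\tr\C$, and $b_1=\tr(\B\nablas\normal)$ appearing in \eqref{eq:beta_min} and \eqref{eq:w_3}.
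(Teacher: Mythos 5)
Your proposal is correct and follows essentially the same route as the paper: expand $I_1$ and $I_3$ in powers of $x_3$, integrate to isolate $w_1=2W_0$ and $w_3=\tfrac23 W_2$, minimize $w_1$ over $\alpha$ to get \eqref{eq:alpha_min}, and then minimize the $\gamma$-free quadratic in $\beta$ to get \eqref{eq:beta_min} and \eqref{eq:w_3}. Your one addition — that the $\gamma$-coefficient of $W_2$ equals $3\gamma$ times the stationarity condition $2a\alpha+2b\alpha\det\C-c/\alpha=0$ and hence vanishes at the optimal $\alpha$ — is a nice structural explanation of a fact the paper only asserts in the remark following the proposition.
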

\begin{proof}
By \eqref{eq:I_1} and \eqref{eq:I_3}, we can write
\begin{align}
I_1&=\tr\C+\alpha^2+2\alpha(b_1+2\beta) x_3 + [4\beta^2+6\alpha\gamma+2b_1(H\alpha^2+\beta)-\alpha^2K\tr\C] x_3^2+O\left(x_3^3\right)\label{eq:Pre_I_1}
  \\
I_3&=\det\C\{\alpha^2+4(\alpha\beta+H\alpha^2)x_3\nonumber\\
  &\qquad\qquad+2\left[2\beta^2+3\alpha\gamma+10 H\alpha^2\beta+(2H^2+K)\alpha^4\right] x_3^2\}+O\left(x_3^3\right).\label{eq:Pre_I_3}
\end{align}
Making use of both these equations in \eqref{eq-CG}, we readily arrive at 
\begin{equation}
\label{eq:pre_w_1}
w_1=2[a\tr\C+(a+b\det\C)\alpha^2-(a+b)\ln(\alpha^2\det\C)],
\end{equation}
which does not depend on either $\beta$ or $\gamma$ and, for given $\tr\C$ and $\det\C$, is minimized for positive $\alpha$ at the value in \eqref{eq:alpha_min}. Choosing $\alpha$ as in \eqref{eq:alpha_min}, we similarly compute
\begin{equation}
\label{eq:pre_w_3}
\begin{split}
w_3&=\frac23\bigg\{8(a+b\det\C)\beta^2+2\left[\frac{2(a+b)(4b\det\C-a)H}{a+b\det\C}+ab_1\right]\beta\\
&+\frac{a(a+b)(2b_1H-K\det\C)}{a+b\det\C}+2\frac{(a+b)^2[2(a+2b\det\C)H^2-aK]}{(a+b\det\C)^2}\bigg\},
\end{split}
\end{equation}
which is independent of $\gamma$ and is minimized for $\beta$ as in \eqref{eq:beta_min}. Inserting \eqref{eq:alpha_min} and \eqref{eq:beta_min} in \eqref{eq:pre_w_1} and \eqref{eq:pre_w_3}, respectively, we conclude the proof.
\qed
\end{proof}
\begin{remark}
Although $\gamma$ features in both $I_1$ and $I_3$ as expressed in \eqref{eq:Pre_I_1} and \eqref{eq:Pre_I_3}, it does not affect $w_1$ in \eqref{eq:pre_w_1} and neither it does	$w_3$ as long as $\alpha$ is chosen so as to minimize $w_1$. Our minimization criterion leaves $\gamma$ undetermined. To determine it, we should expand further the energy density $w_\mathrm{CG}$, so as to include terms of order $h^5$, which we renounce doing here. Both $w_1$ and $w_3$ would however remain unaffected by the value of $\gamma$.
\end{remark}	
\begin{remark}
Equation \eqref{eq:alpha_min} shows clearly how in the compressible case our method differs even more markedly from the classical Kirchhoff-Love hypothesis, as $\alpha=1$ only for $\det\C=1$. Moreover, as already occurs for incompressible materials, the bending content $w_3$ not only depends on the principal curvatures of $\surface$ via $H$ and $K$, but it also depends on the relative orientation of the eigenframes of $\B$ and $\nablas\normal$ via $b_1$.  	
\end{remark}
\begin{remark}
It is perhaps interesting to express both $w_1$ and $w_3$ in \eqref{eq:w_1} and \eqref{eq:w_3} 	in terms of the Lam\'e coefficients, $\lambda$ and $\mu$, associated with $W_\mathrm{CG}$ in the linearized limit \eqref{eq:W_CG_linearized}. By use of \eqref{eq:a_b_c}, we obtain that
\begin{subequations}\label{eq:w_1_w_3_mu_lambda}
\begin{align}
w_1&=\mu\tr\C+\frac12(2\mu+\lambda)\left(1-\ln\frac{(2\mu+\lambda)\det\C}{2\mu+\lambda\det\C}\right)-\left(3\mu+\frac{\lambda}{2}\right),\label{eq:w_1_mu_lambda}\\
w_3&=\frac13\frac{\mu(2\mu+\lambda)^2(16\lambda\det\C+7\mu)}{(2\mu+\lambda\det\C)^3}H^2+\frac53\frac{\mu^2(2\mu+\lambda)}{(2\mu+\lambda\det\C)^2}b_1H\nonumber\\
&-\frac13\frac{\mu(2\mu+\lambda)[(2\mu+\lambda\det\C)\tr\C+2(2\mu+\lambda)]}{(2\mu+\lambda\det\C)^2}K-\frac{1}{12}\frac{\mu^2}{2\mu+\lambda\det\C}b_1^2.\label{eq:w_3_mu_lambda}
\end{align}
\end{subequations}
\end{remark}
\begin{remark}
In the vanishing thickness limit introduced in Sect.~\ref{sec-gent}, we easily find that $w_1$ is minimized by $\C=\mathbf{I}_2$, so that correspondingly, again by Gauss' \emph{theorema egregium} (which requires $K=0$), $w_3$ takes the form
\begin{equation}
\label{eq:w_3_hat}
\widehat{w}_3=\frac{16}{3}\frac{\mu(\lambda+\mu)}{2\mu+\lambda}H^2.
\end{equation}	
\end{remark}
It is thus useful to consider the form acquired by $w_\mathrm{CG}$ in \eqref{eq:w_CG} when $\C$ is close to $\mathbf{I}_2$ and, correspondingly, $\nablas\normal$ is close to $\bm{0}$.
\begin{proposition}
\label{prop-alw1}
Let $\E:=\frac 12(\C-\I_2)$. The following asymptotic representations are valid for $w_1$ and $w_3$ in \eqref{eq:w_1_mu_lambda} and \eqref{eq:w_3_mu_lambda},
\begin{subequations}
\begin{align}
w_1&=\frac{2\lambda\mu}{\lambda+2\mu}\tr^2 \E+2\mu\tr\E^2+O\left(\vert \E\vert^3\right),\label{eq:w_1_asymp}
\\
w_3&=\frac{16}{3}\frac{\mu(\lambda+\mu)}{2\mu+\lambda}H^2
-\frac 43 \mu K
+O\left(\vert\E\vert\vert\nablas\normal\vert\left(\vert\nablas\normal\vert+\vert\E\vert\right)\right).\label{eq:w_3_asymp}
\end{align}
Correspondingly, $\alpha$ and $\beta$ in \eqref{eq:alpha_min} and \eqref{eq:beta_min} become 
\begin{align}
\alpha&=1-\frac{\lambda}{2\mu+\lambda}\tr\E+O\left(\vert \E\vert^2\right),\label{eq:alpha_min_asymp}
\\
\beta&=-\frac{\lambda H}{2\mu+\lambda}+O\left(\vert\nablas\normal\vert\vert\E\vert^2\right).\label{eq:beta_min_asymp}
\end{align}
\end{subequations}
\end{proposition}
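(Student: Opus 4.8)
The plan is to regard $w_1$, $w_3$, $\alpha$, and $\beta$ as smooth functions of two independent families of small quantities, the strain $\E$ and the curvature tensor $\nablas\normal$, and to Taylor-expand each around $\E=\zero$ and $\nablas\normal=\zero$. The whole computation rests on a few elementary facts for the $2\times2$ tensors at hand. Since $\C=\I_2+2\E$, we have $\tr\C=2+2\tr\E$, while the identity $\det(\I_2+2\E)=1+2\tr\E+4\det\E$ together with $2\det\E=\tr^2\E-\tr\E^2$ gives $\det\C=1+2\tr\E+2(\tr^2\E-\tr\E^2)$. Moreover, because $\B$ and $\C$ share the eigenvalues $\lambda_1^2,\lambda_2^2=1+O(\vert\E\vert)$, we have $\B=\I_2+O(\vert\E\vert)$, and hence $b_1=\tr(\B\nablas\normal)=\tr\nablas\normal+O(\vert\E\vert\vert\nablas\normal\vert)=2H+O(\vert\E\vert\vert\nablas\normal\vert)$. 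These three relations are the only inputs needed beyond the closed forms \eqref{eq:w_1_mu_lambda}, \eqref{eq:w_3_mu_lambda}, \eqref{eq:alpha_min}, and \eqref{eq:beta_min}.

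For $w_1$ and $\alpha$, which depend on $\C$ alone, I would substitute the above expansion of $\det\C$ into \eqref{eq:w_1_mu_lambda}. Writing $p:=2\mu+\lambda$ and $\delta:=\det\C-1$, the logarithm becomes $\ln\frac{p\det\C}{p+\lambda\delta}=\ln(1+\delta)-\ln(1+\tfrac{\lambda}{p}\delta)$, which I would expand to second order in $\delta$; using $p-\lambda=2\mu$ and $p^2-\lambda^2=4\mu(\mu+\lambda)$ this collapses to $\frac{2\mu}{p}\delta-\frac{2\mu(\mu+\lambda)}{p^2}\delta^2+O(\delta^3)$. Feeding in $\delta=2\tr\E+2(\tr^2\E-\tr\E^2)+O(\vert\E\vert^3)$ and $\delta^2=4\tr^2\E+O(\vert\E\vert^3)$, the constant and the linear-in-$\tr\E$ contributions to $w_1$ cancel (confirming that $\C=\I_2$ is a critical point of the stretching content), and the surviving quadratic part simplifies, after combining the $\tr^2\E$ coefficients, to $\frac{2\lambda\mu}{\lambda+2\mu}\tr^2\E+2\mu\tr\E^2$, which is \eqref{eq:w_1_asymp}. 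For $\alpha$, expanding the square root in \eqref{eq:alpha_min}, namely $\alpha=(1+\tfrac{\lambda}{p}\delta)^{-1/2}$, to first order with $\delta=2\tr\E+O(\vert\E\vert^2)$ yields \eqref{eq:alpha_min_asymp} at once.

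For $\beta$ I would evaluate the coefficients in \eqref{eq:beta_min} at $\det\C=1$, retaining only the leading order since $\beta=O(\vert\nablas\normal\vert)$, and then use $b_1=2H+O(\vert\E\vert\vert\nablas\normal\vert)$; converting the constants through \eqref{eq:a_b_c}, the $b_1$-term and the $H$-term combine to the single value $-\tfrac{\lambda}{2\mu+\lambda}H$, giving \eqref{eq:beta_min_asymp}. The computation of $w_3$ follows the same pattern but is the most laborious step. Since every one of $H$, $K$, and $b_1$ is linear in $\nablas\normal$, the expression \eqref{eq:w_3_mu_lambda} is already homogeneous of degree two in the curvature, so its leading behaviour is obtained simply by evaluating all four rational coefficients at $\det\C=1$, $\tr\C=2$ and replacing $b_1$ by $2H$. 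The $K$-coefficient then reduces to $-\tfrac43\mu$, while the three contributions proportional to $H^2$ (from the $H^2$, the $b_1H$, and the $b_1^2$ terms, the last two via $b_1=2H$) add up, after clearing the common denominator $p$, to $\frac{16}{3}\frac{\mu(\lambda+\mu)}{2\mu+\lambda}H^2$; this produces \eqref{eq:w_3_asymp}.

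The main obstacle is not any single manipulation but the disciplined tracking of error terms under the two coupled smallness assumptions. For $w_1$ the delicate point is that the correct quadratic coefficient emerges only if one carries the $\det\E$ contribution to $\det\C$ and the full second-order term of the logarithm; dropping either spoils the effective plane-stress modulus $\tfrac{2\lambda\mu}{\lambda+2\mu}$. For $w_3$ one must justify the stated remainder $O(\vert\E\vert\vert\nablas\normal\vert(\vert\nablas\normal\vert+\vert\E\vert))$: the $O(\vert\E\vert)$ corrections to the coefficients multiply the quadratic curvature terms to give $O(\vert\E\vert\vert\nablas\normal\vert^2)$, while the residual $b_1-2H=O(\vert\E\vert\vert\nablas\normal\vert)$ entering the $b_1H$ and $b_1^2$ terms contributes errors of the same and of the $O(\vert\E\vert^2\vert\nablas\normal\vert)$ type; collecting these is routine but must be done carefully to certify that nothing of order $\vert\nablas\normal\vert^2$ or lower has been discarded.
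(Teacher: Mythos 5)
Your proposal is correct and follows essentially the same route as the paper's proof, which simply Taylor-expands the closed forms \eqref{eq:w_1_mu_lambda}, \eqref{eq:w_3_mu_lambda}, \eqref{eq:alpha_min}, and \eqref{eq:beta_min} about $\C=\I_2$ and $\nablas\normal=\zero$ using exactly the elementary estimates you list ($\tr\C=2(1+\tr\E)$, the expansion of $\det\C$, and $b_1=2H+O(\vert\E\vert\vert\nablas\normal\vert)$). One remark in your favour: your coefficient $2$ in front of $(\tr^2\E-\tr\E^2)$ in $\det\C$ is the correct one (since $\det(\I_2+2\E)=1+2\tr\E+4\det\E$ and $2\det\E=\tr^2\E-\tr\E^2$), whereas the paper's estimate \eqref{eq:estimates_2} shows a $4$ there --- an apparent typo, since only the coefficient $2$ reproduces the plane-stress modulus $\frac{2\lambda\mu}{\lambda+2\mu}$ in \eqref{eq:w_1_asymp}.
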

\begin{proof}
To prove \eqref{eq:w_1_asymp} and \eqref{eq:w_3_asymp} it suffices to make use of the following (simple) estimates
\begin{subequations}\label{eq:estimates}
\begin{align}
\tr\C&=2(1+\tr\E),\label{eq:estimates_1}\\
\det\C&=1+2\tr\E+4(\tr^2\E-\tr\E^2)+O\left(\vert\E\vert^3\right),\label{eq:estimates_2}\\
H&=O\left(\vert\nablas\normal\vert\right),\label{eq:estimates_2_bis}\\ K&=O\left(\vert\nablas\normal\vert^2\right),\label{eq:estimates_3}\\
b_1&=2H+O\left(\vert\E\vert\vert\nablas\normal\vert\right)\label{eq:estimates_4}
\end{align}
\end{subequations}
Similarly, \eqref{eq:alpha_min_asymp} and \eqref{eq:beta_min_asymp} follow from inserting \eqref{eq:a_b_c} in \eqref{eq:alpha_min} and \eqref{eq:beta_min} and then using  again \eqref{eq:estimates}.
\qed
\end{proof}
\begin{remark}\label{rem:rigidity}
Both expressions for $\phi$ and $w_\mathrm{CG}$ provided by Prop.~\ref{prop-alw1} are precisely the same as those obtained in
Theorem~5.2 of \cite{CiaMad-2018}.\footnote{It is perhaps worth recalling that \eqref{eq:w_3_asymp} is just the same as the classical formula for the strain energy stored in a moderately bent plate comprised of a linearly isotropic elastic material, see \cite[p.\,133]{love:treatise}, where the Lam\'e coefficients, $\lambda$ and $\mu$, are replaced by Young's modulus $E$ and Poisson's ratio $\sigma$ (see, for example, \cite[p.\,126]{love:treatise}). Similarly, apart from a numerical prefactor due to a difference in scaling the plate's thickness, \eqref{eq:w_3_asymp} is also the same as equation (6.4) of \cite{friesecke:theorem}, which expresses the $\Gamma$-limit on isometries of the elastic free energy of an isotropic nonlinear material, see also footnote~\ref{foot:6.4} below.} This ensures well-posedness to the  minimum energy problem in the limit of small strains. In particular, $w_\mathrm{CG}$ with $w_1$ and $w_2$ as in \eqref{eq:w_1_asymp} and \eqref{eq:w_3_asymp} is the form appropriate to a plate of the elastic energy density envisaged in Koiter's theory for shells \cite{koiter:foundations_I,koiter:foundations_II}, in which stretching and bending energies are blended together, but are kept in a quadratic form. Equations \eqref{eq:w_1_w_3_mu_lambda} above provide instead the stretching and bending contents for a fully nonlinear theory of plates made of the Ciarlet-Geymonat material.
\end{remark}
\subsection{A variant of the Saint-Venant-Kirchhoff material}
Here, to provide a further application of the method proposed in this paper, we consider a variant of the classical Saint-Venant-Kirchhoff material studied in \cite{friesecke:theorem}.\footnote{As shown, for example, in \cite[p.\,155]{Cia-1988}, the classical Saint-Venant-Kirchhoff material is characterized by the following stored energy function
	\begin{equation*}
	\widetilde{W}_\mathrm{SVK}:=\frac{\lambda}{2}\tr^2\Ef+\mu\tr\Ef^2,
	\end{equation*}
	which has the same small-strain limit as \eqref{eq-fr} (see Prop.~\ref{prop-sfr}).} The stored energy density (per unit volume) of this material is
\begin{equation}
\label{eq-fr}
W_\mathrm{SVK}:=\frac\lambda2\tr^2\left(\sqrt{\Cf}-\I\right)
  +
  \mu \tr\left(\sqrt{\Cf}-\I\right)^2,
\end{equation}
where $\lambda$ and $\mu$ are material constants, which can be identified with the Lam\'e coefficients of this material, as shown by the following small-strain approximation to $W_\mathrm{VSK}$. 
\begin{proposition}
\label{prop-sfr}
Letting $\Ef$ be defined as in \eqref{eq:definition_E_f}, we can give $W_\mathrm{SVK}$ the same approximate form in \eqref{eq-CG}, valid for all isotropic materials,
\begin{equation}
\label{eq:W_SVK_approximate}
W_\mathrm{SVK}=\frac{\lambda}2 \tr^2\Ef+\mu\tr\Ef^2+O\left(\vert\Ef\vert^3\right).
\end{equation}
\end{proposition}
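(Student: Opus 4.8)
The plan is to rewrite the energy, which is \emph{exactly} a quadratic form in the tensor $\mathbf{H}:=\sqrt{\Cf}-\I$, as the same quadratic form in $\Ef$, and then to control the error incurred by this replacement. First I would introduce the right stretch tensor $\U:=\sqrt{\Cf}$, the unique symmetric positive-definite square root of $\Cf$ (positive definite since $\det\F>0$ by Prop.~\ref{prop-prelim}), and set $\mathbf{H}:=\U-\I$, so that, directly from \eqref{eq-fr},
\begin{equation*}
W_\mathrm{SVK}=\frac\lambda2\tr^2\mathbf{H}+\mu\tr\mathbf{H}^2.
\end{equation*}

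The key algebraic step is to relate $\mathbf{H}$ to $\Ef$. From $\Ef=\frac12(\Cf-\I)=\frac12(\U^2-\I)$ and $\U=\I+\mathbf{H}$ one obtains the exact polynomial identity $\Ef=\mathbf{H}+\frac12\mathbf{H}^2$. Because this expresses $\Ef$ as a polynomial in $\mathbf{H}$, the tensors $\mathbf{H}$, $\Ef$, and $\Cf$ all commute and admit a common orthonormal eigenbasis; I would therefore reduce the remaining estimates to eigenvalues. Writing $c_i>0$ for the eigenvalues of $\Cf$, the corresponding eigenvalues of $\Ef$ and $\mathbf{H}$ are $e_i=\frac12(c_i-1)$ and $h_i=\sqrt{c_i}-1=\sqrt{1+2e_i}-1$. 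The elementary scalar expansion $\sqrt{1+2e_i}=1+e_i-\frac12 e_i^2+O(e_i^3)$ then gives $h_i=e_i+O(e_i^2)$, that is, $\mathbf{H}=\Ef+O(\vert\Ef\vert^2)$ (equivalently, the fixed-point relation $\mathbf{H}=\Ef-\frac12\mathbf{H}^2$ may be solved by a single iteration, as $\vert\mathbf{H}\vert$ and $\vert\Ef\vert$ are comparable near the identity).

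Finally I would substitute this estimate into the two traces. Since $\tr\mathbf{H}=\sum_i h_i=\tr\Ef+O(\vert\Ef\vert^2)$ and $\tr\mathbf{H}^2=\sum_i h_i^2=\tr\Ef^2+O(\vert\Ef\vert^3)$, squaring the first relation gives $\tr^2\mathbf{H}=\tr^2\Ef+O(\vert\Ef\vert^3)$, whence
\begin{equation*}
W_\mathrm{SVK}=\frac\lambda2\tr^2\mathbf{H}+\mu\tr\mathbf{H}^2=\frac\lambda2\tr^2\Ef+\mu\tr\Ef^2+O\left(\vert\Ef\vert^3\right),
\end{equation*}
which is \eqref{eq:W_SVK_approximate}.

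The only genuine subtlety—and the step I would treat most carefully—is the passage through the matrix square root, where noncommutativity could in principle spoil a naive scalar expansion. This is exactly why I would diagonalize at the outset: the polynomial identity $\Ef=\mathbf{H}+\frac12\mathbf{H}^2$ guarantees simultaneous diagonalizability, reducing every tensor manipulation to the scalar Taylor expansion of $\sqrt{1+2e}$, after which the trace estimates are immediate. I note that the argument uses only that $\Cf$ is symmetric positive definite and makes no reference to the plate structure of $\Cf$, so it applies verbatim to the full three-dimensional $\Cf$; this is the expected outcome, since $W_\mathrm{SVK}$ and $\widetilde{W}_\mathrm{SVK}$ are built to share the common small-strain form of all isotropic materials.
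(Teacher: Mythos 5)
Your proof is correct and follows essentially the same route as the paper, which simply invokes the expansion $\sqrt{\Cf}=\I+\Ef-\frac12\Ef^2+O\left(\vert\Ef\vert^3\right)$ and substitutes into \eqref{eq-fr}. Your version merely makes explicit the justification of that matrix Taylor expansion (via the exact identity $\Ef=\mathbf{H}+\frac12\mathbf{H}^2$ and simultaneous diagonalization), which the paper leaves implicit.
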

\begin{proof}
The desired conclusion follows easily from remarking that
\begin{equation}
\sqrt{\Cf}=\I+\Ef-\frac12\Ef^2+O\left(\vert\Ef\vert^3\right).
\end{equation}
\qed
\end{proof}
A rigorous method was devised in \cite{friesecke:theorem} to determine the bending content $w_3$ of a plate on all isometric embeddings $\y$ of $S$ in $\euclid$. There, $w_3$ is obtained as a $\Gamma$-limit on the class of deformations that minimize the stretching energy. It was also proved in  \cite{friesecke:theorem} that for all isotropic materials $w_3$ reads as the leading term in \eqref{eq:w_3_asymp} and the normal deformation $\phi$ has a quadratic representation with coefficients\footnote{See, in particular, the unnumbered formula on p.\,1494 of \cite{friesecke:theorem}, keeping in mind the different thickness scaling, as there $h$ amounts to our $2h$.\label{foot:6.4}}
\begin{equation}
\label{eq:alpha_beta}
\alpha=1,\quad\beta=-\frac{\lambda H}{2\mu+\lambda},
\end{equation}
in accord with the leading terms in \eqref{eq:alpha_min_asymp} and \eqref{eq:beta_min_asymp}.

For isometric embeddings $\y$, we can easily relax the polynomial approximation for $\phi$. Although this refinement makes our kinematic description more accurate, the bending content $w_3$ is not affected, as shown below for the material with stored energy density $W_\mathrm{SVK}$.
\begin{proposition}
\label{prop:phi_non_quadratic}	
Let $\y$ be such that $\C=\I_2$. Let $\phi$ in \eqref{eq:f_definition} be a function of class $\mathcal{C}^2$ in $x_3$ that obeys \eqref{eq-orient}. The minimum surface energy is
\begin{equation}
\label{eq:w_SVK}
w_\mathrm{SVK}=\int_{-h}^hW_\mathrm{SVK}\dd x_3=h^3\left(\frac{16}{3}\frac{\mu(\lambda+\mu)}{2\mu+\lambda}H^2-\frac43\mu K\right)+O(h^5)
\end{equation}
(where $K=0$, since $\y$ is an isometry), which is attained for
\begin{equation}
\label{eq:phi_min}
\phi=\frac{1}{2H}\frac{1}{2\mu+\lambda}\left(\lambda[1-\cosh(2Hx_3)]+\frac{\lambda^2\cosh(2Hh)+4\mu(\lambda+\mu)}{(2\lambda+\mu)\cosh(2Hh)}\sinh(2Hx_3) \right).
\end{equation}
\end{proposition}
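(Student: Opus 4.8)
The plan is to exploit the hypothesis $\C=\I_2$ to obtain an \emph{exact} (rather than merely polynomial) reduction of $W_\mathrm{SVK}$, and then to treat the minimization over $\phi$ as a one-dimensional calculus-of-variations problem in the transverse variable $x_3$. The step I expect to be the crux is the exact reduction, because $W_\mathrm{SVK}$ is built from $\sqrt{\Cf}$, and it is precisely the isometry hypothesis that renders this square root computable in closed form.

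First I would reduce the integrand. When $\C=\nay\trans\nay=\I_2$, the map $\nay$ restricts to a linear isometry of $\transla_3$ onto $\transla_\normal$, so that $\nay\nay\trans=\B=\I_\normal$. By Remark~\ref{rem-I1} one then has $\Cp=\C+\phi\C_1+\phi^2\C_2=\nay\trans(\I_\normal+\phi\,\nablas\normal)^2\nay$, and since for any positive-definite $\A$ on $\transla_\normal$ an isometry gives $\sqrt{\nay\trans\A\nay}=\nay\trans\sqrt{\A}\,\nay$, it follows (for $h$ small enough that $\I_\normal+\phi\,\nablas\normal$ is positive definite) that $\sqrt{\Cp}=\nay\trans(\I_\normal+\phi\,\nablas\normal)\nay$. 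Because $\Cf=\Cp+\phi'^2\e_3\otimes\e_3$ is block-diagonal with respect to $\transla_3\oplus\mathrm{span}(\e_3)$, one gets $\sqrt{\Cf}-\I=\phi\,\nay\trans\curvature\nay+(\phi'-1)\e_3\otimes\e_3$; taking traces and using $\tr(\B\nablas\normal)=2H$ and $\tr(\B\curvature^2)=4H^2-2K$ yields the exact identity
\begin{equation*}
W_\mathrm{SVK}=\frac{\lambda}{2}\big(2H\phi+\phi'-1\big)^2+\mu\big((4H^2-2K)\phi^2+(\phi'-1)^2\big),
\end{equation*}
with no truncation in $\phi$. This is what makes it possible to relax the polynomial ansatz.

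With $K=0$ (forced by $\C=\I_2$ through the \emph{theorema egregium}), I would minimize $w_\mathrm{SVK}=\int_{-h}^h L\,\dd x_3$, $L:=\frac{\lambda}{2}(2H\phi+\phi'-1)^2+\mu(4H^2\phi^2+(\phi'-1)^2)$, over $\mathcal{C}^2$ functions obeying $\phi(\cdot,0)=0$, with $H$ a fixed parameter. The Euler--Lagrange equation collapses to the linear, constant-coefficient ODE $\phi''-4H^2\phi=-2\lambda H/(\lambda+2\mu)$, with general solution $\phi=\frac{\lambda}{2H(\lambda+2\mu)}+A\cosh(2Hx_3)+B\sinh(2Hx_3)$. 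The constraint $\phi(0)=0$ fixes $A=-\frac{\lambda}{2H(\lambda+2\mu)}$, reproducing the $1-\cosh(2Hx_3)$ combination in \eqref{eq:phi_min}, while the remaining constant $B$ is pinned down by requiring the transverse traction $\partial L/\partial\phi'$ to vanish on the faces $x_3=\pm h$, giving the stated coefficient of $\sinh(2Hx_3)$. The delicate point is exactly this boundary data: there is one interior condition together with two free faces, so the natural conditions at $x_3=\pm h$ must be combined carefully (it is their symmetric combination that fixes $B$).

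Finally, I would verify that the resulting $\phi$ solves the ODE, satisfies $\phi(0)=0$, and reduces as $h\to0$ to the quadratic profile $\phi=x_3-\frac{\lambda H}{2\mu+\lambda}x_3^2+O(x_3^3)$, i.e. to $\alpha=1$, $\beta=-\lambda H/(2\mu+\lambda)$ of \eqref{eq:alpha_beta}. Inserting this expansion into the exact $L$ and retaining terms through $O(x_3^2)$ (the odd powers integrate to zero over $[-h,h]$, so the first neglected contribution is $O(h^5)$), the integrand coefficient assembles to $\frac{8\mu(\lambda+\mu)}{2\mu+\lambda}H^2$, whence $w_\mathrm{SVK}=\frac{2h^3}{3}\cdot\frac{8\mu(\lambda+\mu)}{2\mu+\lambda}H^2+O(h^5)=h^3\big(\frac{16}{3}\frac{\mu(\lambda+\mu)}{2\mu+\lambda}H^2-\frac43\mu K\big)+O(h^5)$ with $K=0$. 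The main obstacle, beyond the exact square-root reduction, is confirming that the non-polynomial corrections to the transverse profile enter only at $O(h^5)$, so that the relaxation genuinely leaves the bending content $w_3$ unchanged.
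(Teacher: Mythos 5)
Your argument is correct and follows essentially the same route as the paper's proof: both reduce $W_\mathrm{SVK}$ to the quadratic transverse Lagrangian \eqref{eq:proof_e}, solve the same Euler--Lagrange equation subject to $\phi(\cdot,0)=0$, fix the remaining constant by an optimality condition at the faces, and expand in powers of $h$ to read off the $h^3$ coefficient. Your two refinements---obtaining the reduced Lagrangian \emph{exactly} by pulling the square root through the isometry $\nay$ rather than truncating $\sqrt{\Cf}$ at $O(\phi^3)$, and fixing the $\sinh$ coefficient by the symmetric combination of the natural boundary conditions for $\partial L/\partial\phi'$ at $x_3=\pm h$ rather than by minimizing $F[\phi_\xi]$ over the scalar $\xi$ in \eqref{eq:family}---are equivalent to (and make slightly more explicit) what the paper does.
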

\begin{proof}
$W_\mathrm{SVK}$ can readily be rewritten as
\begin{equation}\label{eq:proof_a}
W_\mathrm{SVK}=\mu\tr\Cf-(2\mu+3\lambda)\tr\sqrt{\Cf}+\frac{\lambda}{2}\tr^2\sqrt{\Cf}+\frac92\lambda+3\mu.
\end{equation} 
Since $\C=\I_2$, it easily follows from \eqref{eq:C_f_definition} and \eqref{eq:C_phi_definition} that
\begin{equation}\label{eq:proof_b}
\sqrt{\Cf}=\I_2+\frac12\phi\C_1+\frac12\left(\C_2-\frac14\C_1^2 \right)+\phi'\e_3\otimes\e_3+O\left(\phi^3\right).
\end{equation}
Moreover, since $K=0$, by use of \eqref{eq:C1_C2_curvature}, we also see that
\begin{subequations}
\begin{align}
\tr\Cf&=2+\phi'^2+4\phi H+4\phi^2H^2+O\left(\phi^3\right),\label{eq:proff_c}\\
\tr\sqrt{\Cf}&=2+\phi'+2\phi H+O\left(\phi^3\right).\label{eq:proof_d}
\end{align}
\end{subequations}
Inserting \eqref{eq:proff_c} and \eqref{eq:proof_d} into \eqref{eq:proof_a}, we arrive at
\begin{equation}\label{eq:proof_e}
W_\mathrm{SVK}=\left(\mu+\frac\lambda2\right)(\phi'-1)^2+2\lambda(\phi'-1)\phi H+2(2\mu+\lambda)\phi^2H^2+O\left(\phi^3\right).
\end{equation}
Integrating the latter expression over $[-h,h]$ we obtain a functional $F[\phi]$, whose Euler-Lagrange equation reads as
\begin{equation}
\phi''=4H^2\phi-\frac{2\lambda H}{2\mu+\lambda}.
\end{equation}
Solving this equation subject to $\phi(0)=0$, we obtain the following family of functions
\begin{equation}
\label{eq:family}
\phi_\xi(x_3)=\frac{1}{2H}\frac{\lambda}{2\mu+\lambda}[1-\cosh(2Hx_3)]+\xi\sinh(2Hx_3)
\end{equation}
in the parameter $\xi$. There is a single $\xi=\overline{\xi}$ that minimizes $F[\phi_\xi]$. Setting $\xi=\overline{\xi}$ in \eqref{eq:family}, we find \eqref{eq:phi_min}. Expanding $F[\phi_{\overline{\xi}}]$ in powers of $h$, we find \eqref{eq:w_SVK}.
\qed
\end{proof}	
\begin{remark}\label{rem:quadratic}
The quadratic approximation to $\phi_{\overline{\xi}}$ has the form $\phi_{\overline{\xi}}=\overline{\alpha}x_3+\overline{\beta}x_3^2$, where $\overline{\beta}$ is the same as $\beta$ in \eqref{eq:alpha_beta}, but  
\begin{equation}\label{eq:alpha_bar}
\overline{\alpha}=\frac{\lambda^2\cosh(2Hh)+4\mu(\lambda+\mu)}{(2\mu+\lambda)^2\cosh(2Hh)}=1-\frac{8\mu(\lambda+\mu)H^2}{(2\mu+\lambda)^2}h^2+O(h^4).
\end{equation}
\end{remark}

\section{Conclusion}\label{sec:conclusion}
We have revised the classical Kirchhoff-Love hypothesis, perhaps making it more apt to derive the blending of stretching and bending energies of a plate from the free-energy functional of three-dimensional nonlinear elasticity. In summary, we have achieved two main results: (i) we have shown that measures of stretching enter the bending energy (in addition to the expected measures of bending); (ii) we have reconciled the Kirchhoff-Love hypothesis to the rigorous $\Gamma$-convergence results on the ground where these can be compared with ours.

We have been concerned with developing a general method to obtain two-dimensional energies from three-dimensional ones and we tested it in a number of cases, thus reviving a good practice which Truesdell~\cite{truesdell:influence} lamented to be forgotten:
\begin{quote}
 ``In mathematical practice today it is, unfortunately, often forgotten that to derive basic equations is even so much a mathematician's duty as to study their properties.''
\end{quote}

Of course, there is much room (and hopes) for improvement and further extension of the proposed method.

First, the function $\phi$ introduced in \eqref{eq:f_definition} was almost invariably taken to be polynomial in $x_3$. One wonders whether $\phi$ could be chosen in a more general class of functions without jeopardizing our conclusions. The only exploration we did along these lines was in Prop.~\ref{prop:phi_non_quadratic}, but for isometric embeddings of $S$; this did not affect the bending content $w_3$, but had an effect on $\alpha$, which changed at the order $O(h^2)$, see \eqref{eq:alpha_bar}. The question is then whether we can expect that, as a rule, the bending content is not affected by letting $\phi$ vary in a wider class of functions.

Second, and more importantly, the representation of the deformation $\f$ in \eqref{eq:f_definition} is \emph{not} the most general possible. It would be interesting to replace \eqref{eq:f_definition} by 
\begin{equation}
\label{eq:ff_definition}
\f(\x,x_3)=\y(\x)+\phi(\x,x_3)\dir(\x),
\end{equation}
where the unit vector $\dir$ is a \emph{director} field on $S$, which contributes to the deformation of the whole  plate $\slab$ on the same footing as $\y$, representing the strains across the plate's thickness. Were we able to retrace our entire method starting from \eqref{eq:ff_definition} instead of \eqref{eq:f_definition}, the surface energy density $w$ resulting from a parent volume density $W$ would be a function of $\dir$ and $\nabla\dir$, as well as of $\y$ and $\nabla\y$.

Letting $\dir\cdot\normal>0$ throughout the deformed surface $\surface$, we find ourselves in the mist of the Cosserat director-theory for plates (and shells). This theory, which goes back to the pioneering works of the Cosserat brothers \cite{cosserat:theorie,cosserat:theorie_livre}, is admirably rephrased in modern terms in the book \cite{antman:nonlinear} (see, in particular, Chap.\,XIV). A full analysis of strain and equilibrium equations were first neatly developed in \cite{ericksen:exact}. In connection with this theory, the classical Kirchhoff-Love hypothesis was also used in \cite{naghdi:nonlinear}, always assuming $\dir\equiv\normal$. A more general thermodynamic treatment of one-director surfaces was presented in \cite{naghdi:theory}. As we also learn in Sect.~1.9 of \cite{ciarlet:mathematical_II}, this theory is intimately related to the Reissner-Mindlin theory of plates \cite{reissner:theory_1944,reissner:theory_1945,mindlin:influence}, which indeed allows for the normals to the mid surface in the undeformed configuration \emph{not} to remain normal to the deformed mid surface (as also illustrated in Sect.~5.2 of \cite{hughes:finite}).

All this body of knowledge suggests to take \eqref{eq:ff_definition} as a general representation of the deformation field within a plate and use it to perform a dimension-reduction of the three-dimensional stored energy to derive a genuine two-dimensional energy functional; a similar pursuit was undertaken in \cite{CiaMad-2018} (see, in particular Sect.~6.2).\footnote{There, the fields $\bm{\eta}$ and $\bm{\zeta}$, being parallel to one another, can be taken such that $\bm{\eta}(\x)=\alpha(\x)\dir(\x)$ and $\bm{\zeta}(\x)=\beta(\x)\dir(\x)$, respectively.} It remains to face the difficulties offered by assuming \eqref{eq:ff_definition} in our entire development, a task which, if not easy, might be desirable to undertake.

\appendix
\section{Cartesian formulae for $H$ and $K$}\label{sec:H_K}
Consider a deformation $\y$ of the planar surface $S$ such that $\det\C=1$, which ensures that $\y$ preserves the area of any portion of $S$. Letting $(\e_1,\e_2)$ be an orthonormal frame in the plane that contains $S$, we represent the deformation gradient $\nabla\y$ as
\begin{equation}
\label{eq:nabla_y}
\nabla\y=\av_1\otimes\e_1+\av_2\otimes\e_2,
\end{equation} 
where
\begin{equation}
\label{eq:a_1_a_2}
\av_1:=\nay\e_1\quad\text{and}\quad\av_2:=\nay\e_2.
\end{equation}
The vectors $\av_1(\x)$ and $\av_2(\x)$ are tangent to $\surface$ at the point $\y(\x)$; they need not be orthogonal to one another, but it follows from \eqref{eq:nabla_y} that 
\begin{equation}
\label{eq:det_C}
\det\C=a_1^2a_2^2-(\av_1\cdot\av_2)^2=\vert\av_1\times\av_2\vert^2=1,
\end{equation}
and so, the normal $\normal$ to $\surface$ can be represented as
\begin{equation}
\label{eq:normal}
\normal=\av_1\times\av_2.
\end{equation}
From \eqref{eq:nabla_y}, we obtain that
\begin{equation}
\label{eq:nabla_y_inverse_app}
\nay^{-1}=a_2^2\e_1\otimes\av_1+a_1^2\e_2\otimes\av_2-(\av_1\cdot\av_2)\left(\e_2\otimes\av_1+\e_1\otimes\av_2\right),
\end{equation}
so that
\begin{equation}
\begin{split}
\label{eq:curvature_tensor_app}
\nablas\normal=\nan\nay^{-1}&=a_2^2\nan\e_1\otimes\av_1+a_1^2\nan\e_2\otimes\av_2\\&-(\av_1\cdot\av_2)[\nan\e_2\otimes\av_1+\nan\e_1\otimes\av_2].
\end{split}
\end{equation}
It readily follows from \eqref{eq:normal} that
\begin{equation}
\label{eq:identity_app}
\nan\e=(\nabla\av_1)\e\times\av_2+\av_1\times(\nabla\av_2)\e,
\end{equation}
for any vector $\e$ in the space spanned by $(\e_1,\e_2)$. Combining \eqref{eq:normal} and \eqref{eq:identity_app}, we arrive at the identities
\begin{equation}
\label{eq:identities_app}
\nan\e\cdot\av_1=-\normal\cdot(\nabla\av_1)\e,\quad\nan\e\cdot\av_2=-\normal\cdot(\nabla\av_2)\e.
\end{equation}
Use of these in \eqref{eq:curvature_tensor_app} leads us to
\begin{equation}
\begin{split}
2H&=\tr\curvature\\
&=(\av_1\cdot\av_2)[\normal\cdot(\nabla\av_1)\e_2+\normal\cdot(\nabla\av_2)\e_1]-a_1^2\normal\cdot(\nabla\av_2)\e_2-a_2^2\normal\cdot(\nabla\av_1)\e_1,
\end{split}
\end{equation}
\begin{equation}
\begin{split}
	K&=\curvature\av_1\times\curvature\av_2\cdot\normal\\
	&=(\normal\cdot(\nabla\av_1)\e_1)(\normal\cdot(\nabla\av_2)\e_2)-(\normal\cdot(\nabla\av_2)\e_1)(\normal\cdot(\nabla\av_1)\e_2),
\end{split}
\end{equation}
where $\av_1$ and $\av_2$ are given by \eqref{eq:a_1_a_2}, $\normal$ by \eqref{eq:normal}, and
\begin{equation}
\nabla\av_i=(\nablatwo\y)\e_i,\quad i=1,2.
\end{equation}

\begin{acknowledgements}
The work of O.O. was supported financially by the Department of Mathematics of the University of Pavia as part of the activities funded by the Italian MIUR under the nationwide Program ``Dipartimenti di Eccellenza (2018-2022).'' E.G.V. wishes to thank Peter Palffy-Muhoray for having introduced him to the fascinating world of nematic elastomers and to the charm of Gent's material.
\end{acknowledgements}
\section*{Conflict of interest}
The authors declare that they have no conflict of interest.


\begin{thebibliography}{10}
	\providecommand{\url}[1]{{#1}}
	\providecommand{\urlprefix}{URL }
	\expandafter\ifx\csname urlstyle\endcsname\relax
	\providecommand{\doi}[1]{DOI~\discretionary{}{}{}#1}\else
	\providecommand{\doi}{DOI~\discretionary{}{}{}\begingroup
		\urlstyle{rm}\Url}\fi
	
	\bibitem{antman:nonlinear}
	Antman, S.S.: Nonlinear Problems of Elasticity, \emph{Applied Mathematical
		Sciences}, vol. 107.
	\newblock Springer, New York (1995)
	
	\bibitem{armon:geometry}
	Armon, S., Efrati, E., Kupferman, R., Sharon, E.: Geometry and mechanics in the
	opening of chiral seed pods.
	\newblock Science \textbf{333}, 1726--1730 (2011)
	
	\bibitem{beatty:average}
	Beatty, M.F.: An average-stretch full-network model for rubber elasticity.
	\newblock J. Elast. \textbf{70}, 65--86 (2003)
	
	\bibitem{beatty:radial}
	Beatty, M.F.: On the radial oscillations of incompressible, isotropic, elastic
	and limited elastic thick-walled tubes.
	\newblock Int. J. Non-Linear Mech. \textbf{42}(2), 283--297 (2007)
	
	\bibitem{bourquin:gamma}
	Bourkin, F., Ciarlet, P.G., Geymonat, G., Raoult, A.: ${\Gamma}$-convergence et
	analyse asymptotique des plaques minces.
	\newblock C. R. Acad. Sci. Paris \textbf{315}, 1017--1024 (1992)
	
	\bibitem{camacho-lopez:fast}
	Camacho-Lopez, M., Finkelmann, H., Palffy-Muhoray, P., Shelley, M.: Fast
	liquid-crystal elastomer swims into the dark.
	\newblock Nature Mater. \textbf{3}, 307--310 (2004)
	
	\bibitem{cauchy:exercises}
	Cauchy, A.L.: Sur l'\'equilibre e le mouvement d'une plaque solide.
	\newblock In: Exercises de Math\'ematiques. {T}roisi\`eme Ann\'ee, pp.
	328--355. De Bure, Paris (1828)
	
	\bibitem{Cia-1988}
	Ciarlet, P.G.: Mathematical Elasticity. {V}olume {I}: {T}hree-dimensional
	Elasticity, \emph{Studies in mathematics and its applications}, vol.~20.
	\newblock North-Holland, Amsterdam (1988)
	
	\bibitem{ciarlet:mathematical_II}
	Ciarlet, P.G.: Mathematical Elasticity. {V}olume {II}: {T}heory of Plates,
	\emph{Studies in mathematics and its applications}, vol.~27.
	\newblock North-Holland, Amsterdam (1988)
	
	\bibitem{ciarlet:introduction}
	Ciarlet, P.G.: An introduction to differential geometry with applications to
	elasticity.
	\newblock J. Elast. \textbf{78--79}, 1--215 (2005)
	
	\bibitem{CiaGey-1982}
	Ciarlet, P.G., Geymonat, G.: Sur les lois de comportement en \'elasticit\'e non
	lin\'eaire compressible.
	\newblock C. R. Acad. Sci. Paris \textbf{295}(4), 423--426 (1982)
	
	\bibitem{CiaMad-2018}
	Ciarlet, P.G., Mardare, C.: An existence theorem for a two-dimensional
	nonlinear shell model of {K}oiter's type.
	\newblock Math. Mod. Meth. Appl. Sci. \textbf{28}(14), 2833--2861 (2018)
	
	\bibitem{cosserat:theorie}
	Cosserat, E., Cosserat, F.: Sur la th\'{e}orie des corps minces.
	\newblock C. R. Acad. Sci. Paris \textbf{146}, 169--172 (1908)
	
	\bibitem{cosserat:theorie_livre}
	Cosserat, E., Cosserat, F.: Th\'{e}orie des Corps D\'{e}formables.
	\newblock Hermann, Paris (1909).
	\newblock Reprinted in 2009 with a preface by G. Capriz followed by a
	historical essay by M. Brocato and K. Chatzis
	
	\bibitem{ledret:nonlinear_membrane}
	{L}e {D}ret, H., Raoult, A.: The nonlinear membrane model as variational limit
	of nonlinear three-dimensional elasticity.
	\newblock J. Math. Pures Appl. \textbf{74}, 549--578 (1995)
	
	\bibitem{ledret:membrane_shell}
	{L}e {D}ret, H., Raoult, A.: The membrane shell model in nonlinear elasticity:
	{A} variational asymptotic derivation.
	\newblock J. Nonlinear Sci. \textbf{6}, 59--84 (1996)
	
	\bibitem{efrati:buckling}
	Efrati, E., Sharon, E., Kupferman, R.: Buckling transition and boundary layer
	in non-{E}uclidean plates.
	\newblock Phys. Rev. E \textbf{80}, 016602 (2009)
	
	\bibitem{efrati:elastic}
	Efrati, E., Sharon, E., Kupferman, R.: Elastic theory of unconstrained
	non-{E}uclidean plates.
	\newblock J. Mech. Phys. Solids \textbf{57}(4), 762--775 (2009)
	
	\bibitem{efrati:metric}
	Efrati, E., Sharon, E., Kupferman, R.: The metric description of elasticity in
	residually stressed soft materials.
	\newblock Soft Matter \textbf{9}, 8187--8197 (2013)
	
	\bibitem{zuniga:constitutive}
	El{\'i}as-Z{\'u}{\~n}iga, A., Beatty, M.F.: Constitutive equations for amended
	non-{G}aussian network models of rubber elasticity.
	\newblock Internat. J. Engrg. Sci. \textbf{40}(20), 2265--2294 (2002)
	
	\bibitem{ericksen:exact}
	Ericksen, J.L., Truesdell, C.: Exact theory of stress and strain in rods and
	shells.
	\newblock Arch. Rational Mech. Anal. \textbf{1}, 295–323 (1957)
	
	\bibitem{friesecke:derivation_shell}
	Friesecke, G., James, R.D., Mora, M.G., M\"{u}ller, S.: Derivation of nonlinear
	bending theory for shells from three dimensional nonlinear elasticity by
	{G}amma-convergence.
	\newblock C. R. Acad. Sci. Paris \textbf{336}, 697--702 (2003)
	
	\bibitem{friesecke:theorem}
	Friesecke, G., James, R.D., M\"uller, S.: A theorem on geometric rigidity and
	the derivation of nonlinear plate theory from three-dimensional elasticity.
	\newblock Comm. Pure and Appl. Math. \textbf{55}(11), 1461--1506 (2002)
	
	\bibitem{friesecke:hierarchy}
	Friesecke, G., James, R.D., M\"uller, S.: A hierarchy of plate models derived
	from nonlinear elasticity by {G}amma-convergence.
	\newblock Arch. Rational Mech. Anal. \textbf{180}, 183--236 (2006)
	
	\bibitem{Gen-1996}
	Gent, A.N.: A new constitutive relation for rubber.
	\newblock Rubber Chem. Technol. \textbf{69}(1), 59--61 (1996)
	
	\bibitem{gladman:biomimetic}
	Gladman, A.S., Matsumoto, E.A., Nuzzo, R.G., Mahadevan, L., Lewis, J.A.:
	Biomimetic {4D} printing.
	\newblock Nature Mater. \textbf{15}, 413--418 (2016)
	
	\bibitem{gurtin:mechanics}
	Gurtin, M.E., Fried, E., Anand, L.: The Mechanics and Thermodynamics of
	Contiuna.
	\newblock Cambridge University Press, Cambridge (2010)
	
	\bibitem{gurtin:continuum}
	Gurtin, M.E., Murdoch, A.I.: A continuum theory of elastic material surfaces.
	\newblock Arch. Rational Mech. Anal. \textbf{57}, 291--323 (1975)
	
	\bibitem{helfrich:elastic}
	Helfrich, W.: Elastic properties of lipid bilayers: {T}heory and possible
	experiments.
	\newblock Z. Naturforsch. \textbf{28c}, 693--703 (1973)
	
	\bibitem{holzapfel:similarities}
	Holzapfel, G.A.: Similarities between soft biological tissues and rubberlike
	materials.
	\newblock In: P.E. Austrell, L.~Kari (eds.) Constitutive Models for Rubber
	{IV}, {P}roceedings of the 4th European Conference on ``Constitutive Models
	for Rubber'' ({ECCMR} 2005), pp. 607--617. Balkema, Lisse (2005)
	
	\bibitem{Hor-2015}
	Horgan, C.O.: The remarkable {G}ent constitutive model for hyperelastic
	materials.
	\newblock Int. J. Non-Linear Mech. \textbf{68}, 9--16 (2015)
	
	\bibitem{horgan:molecular}
	Horgan, C.O., Saccomandi, G.: A molecular-statistical basis for the {G}ent
	constitutive model of rubber elasticity.
	\newblock J. Elast. \textbf{68}, 167--176 (2002)
	
	\bibitem{hughes:finite}
	Hughes, T.J.R.: The Finite Element Method.
	\newblock Prentice-Hall, Englewood Cliffs (1987)
	
	\bibitem{kamien:geometry}
	Kamien, R.D.: The geometry of soft materials: a primer.
	\newblock Rev. Mod. Phys. \textbf{74}, 953--971 (2002)
	
	\bibitem{kirchhoff:uber}
	Kirchhoff, G.: \"{U}ber das {G}leichgewicht und die {B}ewegung einer
	elastischen {S}cheibe.
	\newblock J. reine angew. Math. \textbf{40}, 51--88 (1850)
	
	\bibitem{kirchhoff:mechanik}
	Kirchhoff, G.: Vorlesungen \"{u}ber Mechanik.
	\newblock Teubner, Leipzig (1897).
	\newblock Digitalized by Google
	\url{https://archive.org/details/vorlesungenberm04kircgoog/mode/2up}
	
	\bibitem{koiter:consistent}
	Koiter, W.T.: A consistent first approximation in the general theory of thin
	elastic shells.
	\newblock In: W.T. Koiter (ed.) Proc. {IUTAM} {S}ymposium on the {T}heory of
	{T}hin {E}lastic {S}hells, {D}elft, pp. 12--33. North-Holland, Amsterdam
	(1960)
	
	\bibitem{koiter:nonlinear}
	Koiter, W.T.: On the nonlinear theory of thin elastic shells.
	\newblock Proc. Kon. Ned. Akad. Wetensch. \textbf{B69}, 1--54 (1966)
	
	\bibitem{koiter:foundations_I}
	Koiter, W.T.: On the foundations of the linear theory of thin elastic shells.
	{I}.
	\newblock Proc. Kon. Ned. Akad. Wetensch. \textbf{B73}, 169--182 (1970)
	
	\bibitem{koiter:foundations_II}
	Koiter, W.T.: On the foundations of the linear theory of thin elastic shells.
	{II}.
	\newblock Proc. Kon. Ned. Akad. Wetensch. \textbf{B73}, 183--195 (1970)
	
	\bibitem{kowalski:curvature}
	Kowalski, B.A., Mostajeran, C., Godman, N.P., Warner, M., White, T.J.:
	Curvature by design and on demand in liquid crystal elastomers.
	\newblock Phys. Rev. E \textbf{97}, 012504 (2018)
	
	\bibitem{kuhn:beziehungen}
	Kuhn, W.: Beziehungen zwischen {M}olek{\"u}lgro{\ss}e, statistischer
	{M}olek{\"u}lgestalt und elastischen {E}igenschaften hochpolymerer {S}toffe.
	\newblock Kolloid-Z. \textbf{76}, 258--271 (1936)
	
	\bibitem{kuhn:beziehungen_a}
	Kuhn, W., Gr{\"u}n, F.: Beziehungen zwischen elastischen {K}onstanten und
	{D}ehnungsdoppelbrechung hochelastischer {S}toffe.
	\newblock Kolloid-Z. \textbf{101}, 248--271 (1942)
	
	\bibitem{levy:memoire}
	L\'evy, M.: M\'emoire sur la th\'eorie des plaques \'elastiques planes.
	\newblock J. Math. Pures Appl. \textbf{30}, 219--306 (1877)
	
	\bibitem{love:treatise}
	Love, A.E.H.: A Treatise on the Mathematical Theory of Elasticity, 4th edn.
	\newblock Cambridge University Press, Cambridge (1929).
	\newblock Reprinted in 2013
	
	\bibitem{mindlin:influence}
	Mindlin, R.D.: Influence of rotatory inertia and shear on flexural motions of
	isotropic elastic plate.
	\newblock J. Appl. Mech. \textbf{18}, 31--38 (1951)
	
	\bibitem{modes:disclination}
	Modes, C.D., Bhattacharya, K., Warner, M.: Disclination-mediated thermo-optical
	response in nematic glass sheets.
	\newblock Phys. Rev. E \textbf{81}, 060701 (2010)
	
	\bibitem{modes:gaussian}
	Modes, C.D., Bhattacharya, K., Warner, M.: Gaussian curvature from flat
	elastica sheets.
	\newblock Proc. R. Soc. A \textbf{467}(2128), 1121--1140 (2011)
	
	\bibitem{modes:negative}
	Modes, C.D., Warner, M.: Negative {G}aussian curvature from induced metric
	changes.
	\newblock Phys. Rev. E \textbf{92}, 010401 (2015)
	
	\bibitem{mostajeran:curvature}
	Mostajeran, C.: Curvature generation in nematic surfaces.
	\newblock Phys. Rev. E \textbf{91}, 062405 (2015)
	
	\bibitem{mostajeran:frame}
	Mostajeran, C., Warner, M., Modes, C.D.: Frame{,} metric and geodesic evolution
	in shape-changing nematic shells.
	\newblock Soft Matter \textbf{13}, 8858--8863 (2017)
	
	\bibitem{mostajeran:encoding}
	Mostajeran, C., Warner, M., Ware, T.H., White, T.J.: Encoding {G}aussian
	curvature in glassy and elastomeric liquid crystal solids.
	\newblock Proc. R. Soc. A \textbf{472}(2189), 20160112 (2016)
	
	\bibitem{naghdi:theory}
	Naghdi, P.M.: The theory of shells and plates.
	\newblock In: S.~Fl\"ugge, C.~Truesdell (eds.) Handbuch der Physik, vol. VIa2,
	pp. 425--640. Springer-Verlag, Berlin (1972)
	
	\bibitem{naghdi:nonlinear}
	Naghdi, P.M., Nordgren, R.P.: On the nonlinear theory of elastic shells under
	the {K}irchhoff hypothesis.
	\newblock Quart. Appl. Math. \textbf{21}, 49--59 (1963)
	
	\bibitem{novozhilov:foundations}
	Novozhilov, V.V.: Foundations of the Nonlinear Theory of Elasticity, 2 edn.
	\newblock Graylock, Rochester (1953)
	
	\bibitem{ozenda:blend}
	Ozenda, O., Sonnet, A.M., Virga, E.G.: A blend of stretching and bending in
	nematic polymer networks (2020).
	\newblock \url{https://arxiv.org/abs/2004.05240}
	
	\bibitem{plucinsky:programming}
	Plucinsky, P., Lemm, M., Bhattacharya, K.: Programming complex shapes in thin
	nematic elastomer and glass sheets.
	\newblock Phys. Rev. E \textbf{94}, 010701 (2016)
	
	\bibitem{podio:exact}
	Podio-Guidugli, P.: An exact derivation of the thin plate equation.
	\newblock J. Elast. \textbf{22}, 121--133 (1989)
	
	\bibitem{podio:constraint}
	Podio-Guidugli, P.: Constraint and scaling methods to derive shell theory from
	three-dimensional elasticity.
	\newblock Riv. Mat. Univ. Parma \textbf{16}, 73--83 (1990)
	
	\bibitem{poisson:memoire}
	Poisson, S.D.: M\'emoire sur l'\'equilibre et le mouvement des corps
	\'elastiques.
	\newblock M\'em. de l'Acad. (Paris) \textbf{8}, 357--571 (1828)
	
	\bibitem{reissner:theory_1944}
	Reissner, E.: On the theory of bending of elastic plates.
	\newblock J. Math. Phys. \textbf{23}, 184--191 (1944)
	
	\bibitem{reissner:theory_1945}
	Reissner, E.: The effect of transverse shear deformations on the bending of
	elastic plates.
	\newblock J. Appl. Mech. \textbf{12}, A69--A77 (1945)
	
	\bibitem{reissner:reflections}
	Reissner, E.: Reflections on the theory of elastic plates.
	\newblock Appl. Mech. Rev. \textbf{38}, 1453--1464 (1985)
	
	\bibitem{siefert:bio-inspired}
	Si\'efert, E., Reyssat, E., Bico, J., Roman, B.: Bio-inspired pneumatic
	shape-morphing elastomers.
	\newblock Nature Mater. \textbf{18}, 24--28 (2019)
	
	\bibitem{stoker:differential}
	Stoker, J.J.: Differential Geometry, \emph{Pure and Applied Mathematics},
	vol.~XX.
	\newblock Wiley-Interscience, New York (1969)
	
	\bibitem{timoshenko:theory}
	Timoshenko, S., Woinowsky-Krieger, S.: Theory of Plates and Shells, 2 edn.
	\newblock McGraw-Hill Book Co., New York (1959).
	\newblock Reissued in 1987
	
	\bibitem{treloar:non-linear_third}
	Treloar, L.R.G.: The Physics of Rubber Elasticity, 3rd edn.
	\newblock Oxford Classic Texts in the Physical Sciences. Oxford University
	Press, Oxford (2005)
	
	\bibitem{truesdell:influence}
	Truesdell, C.: The influence of elasticity on analysis: the classical heritage.
	\newblock Bull. Amer. Math. Soc. \textbf{9}, 293--310 (1983)
	
	\bibitem{villaggio:mathematical}
	Villaggio, P.: Mathematical Models for Elastic Structures.
	\newblock Cambridge University Press, Cambridge (1997)
	
	\bibitem{wang:statistical}
	Wang, M.C., Guth, E.: Statistical theory of networks of non-{G}aussian flexible
	chains.
	\newblock J. Chem. Phys. \textbf{20}(7), 1144--1157 (1952)
	
	\bibitem{warner:topographic}
	Warner, M.: Topographic mechanics and applications of liquid crystalline
	solids.
	\newblock Annu. Rev. Condens. Matter Phys. \textbf{11}(1), 125--145 (2020)
	
	\bibitem{warner:nematic}
	Warner, M., Mostajeran, C.: Nematic director fields and topographies of solid
	shells of revolution.
	\newblock Proc. R. Soc. A \textbf{474}(2210), 20170566 (2018)
	
	\bibitem{warner:liquid}
	Warner, M., Terentjev, E.M.: Liquid Crystal Elastomers, \emph{International
		Series of Monographs on Physics}, vol. 120.
	\newblock Oxford University Press, New York (2003)
	
	\bibitem{white:programmable}
	White, T.J., Broer, D.J.: Programmable and adaptive mechanics with liquid
	crystal polymer networks and elastomers.
	\newblock Nature Mater. \textbf{14}(2210), 1087--1098 (2015)
	
\end{thebibliography}

\end{document}